\theoremstyle{plain}
\newtheorem{thm}{Theorem}
\newtheorem{prop}[thm]{Proposition}
\theoremstyle{definition}
\theoremstyle{remark}
\def\a{\alpha}
\def\b{\beta}
\def\d{\delta}
\def\vt{\vartheta}
\def\o{\omega}
\title{Decomposition of third-order constitutive tensors}
\author{Yakov Itin and Shulamit Reches\\
Mathematics Department, Jerusalem College of Technology, Israel}
\date{\today}
\begin{document}

\maketitle
\tableofcontents
\begin{abstract}
    Third-order tensors are widely used as a mathematical tool for modeling physical properties of media in solid state physics. In most cases, they arise as constitutive tensors of proportionality between basic physics quantities.   The constitutive tensor can be considered as a complete set of physical parameters of a medium. The algebraic features of the constitutive tensor can be seen as a tool for proper identification of natural material, as crystals, and for design the artificial nano-materials with  prescribed properties.

    In this paper, we study the algebraic properties of a generic 3-rd order tensor relative to its invariant decomposition. In a correspondence to different groups acted on the basic vector space, we present the hierarchy of types of tensor  decomposition  into invariant subtensors. In particular, we discuss the problem of non-uniqueness and reducibility of high-order tensor  decomposition. For a generic 3-rd order tensor, these features are described  explicitly. 
    In the case of special tensors of a prescribed  symmetry, the decomposition turns out to be irreducible and unique. We present the explicit results for two physically interesting models: the piezoelectric  tensor as an example of a pair  symmetry and the Hall tensor as an example of  a pair skew-symmetry. 
\end{abstract}
\section{Introduction}
Three-dimensional third-order tensors have a wide rang of applications especially in solid state physics, see e.g. \cite{Boehler}, \cite{Hartmann}, \cite{Haus}, \cite{Nye}, \cite{Tinder}. 
The well-known examples are the piezoelectric and piezomagnetic tensors that connect the electric and magnetic fields respectively to the elasticity stress tensor. The corresponded third-order  tensors are symmetric in a pair of their indices and thus have at most 18 independent components. 
The electro-optical tensor and the second harmonic generation  tensor in optics  relate  dielectric impermeability and  dielectric polarization respectively to the exterior electric field. These quantities are  third-order tensors of the same pair  symmetry type.  The Hall  tensor is a third-order tensor of an alternative algebraic symmetry. It is skew-symmetric in a pair of its indices and thus has at most 9 independent components. 

As it is well-known, the distinct components of a tensor themselves do not have an invariant meaning. Their magnitude depend on the coordinate system used. A  covariant meaning, however, can be given to different  subsets of the components that form  sub-tensors of a given tensor. The separation of the whole set of tensor  components into subsets that are tensors themselves is a useful procedure. It is usually termed as {\it irreducible decomposition}. In applications literature, 
this procedure  is used sometimes rather ambiguously, without given the precise definitions and the necessary conditions. 
At another hand, the solid mathematical treatment that can be found in pure mathematical literature  does not usually deal  with the specific tensors encountered in  applications. 
The goal of the current paper is to fill the indicated gap. Although we are studying here only the simplest case of the third-order tensor in three dimensional space, the basic algebraic problems of  reducibility of the unique decomposition and of  non-uniqueness of the irreducible decomposition is exhibited  implicitly. In the case of  physically meaningful third-order tensors with specific prescribed symmetries, these problems do  completely disappear:  The irreducible decomposition of the partially symmetric (or skew-symmetric) tensors is unique. It seems to be a good basis to guess that also the physically meaningful higher-order tensors must have additional symmetries and accept unique irreducible decomposition. This requirement may be a useful guiding principle in  much more complicated situations in general relativity where fourth- and sixth-order tensors  are naturally emerge, see \cite{Baekler:2014kha}, \cite{Itin:2018dru}, \cite{Itin:2016nxk}. 

Recently, the third-order tensors and their decomposition were  studied intensively. Such procedure is usually  based on the {\it harmonic decomposition}  from representation theory, see e.g. \cite{Jerphagnon}, \cite{Auffray08}, \cite{Auffray13}, \cite{Olive}, \cite{Qi}, \cite{Vannucci}  and the references given therein. In this approach, a tensor is decomposed  into rotational invariant sub-tensors, i.e., relative to the group $SO(3,\mathbb R)$.  In the current paper, we apply an alternative strategy, that can be traced to Weyl \cite{Weyl} and {Schouten} \cite{Schouten} . In particular, different types of  decomposition are constructed in a correspondence with the geometric structure on the basic vector space $V$. As a result  different types of decomposition are  encountered  relative to different groups of  transformations. This way we are able to identify the sets of sub-tensors of similar algebraic and physical properties. This procedure is known to be relevant for fourth-order constitutive tensors in electromagnetism \cite{HO}, linear elasticity \cite{IH-JMP}, \cite{Itin-quad}, \cite{YI-MMS20}, and gravity \cite{HK}.  

The organization of the current paper is as follows: 

In Sect. 2, we present the basic facts about the  definition of tensors and relevant groups of their transformations. 
Third-order tensor is defined as a multi-linear map of a Cartesian product of three copies of a vector spaces over a number field into the same field. In this paper, we restrict to the real numbers field $\mathbb R$ and to a $3-$dimensional vector space $V$. The corresponded tensor is denoted by $T^{ijk}$  with  indices changing in the range  $i,j,k=1,2,3$. In Sect. 3, we formulate a list of requirements for {\it irreducible  decomposition of a tensor}. These conditions are presented in two equivalent forms -- in term of sub-tensors and in term of subspaces of the tensor space. 
Our main results are derived in Sect. 4. Step-by-step,  we provide the invariant decomposition of the tensor with respect to the group $GL(3,\mathbb R)$ and its subgroups. The first level of the decomposition is derived by the tools of the symmetry group $S_3$, i.e., Young tableau. Already on this stage, we meet the main problem: The irreducible decomposition is not unique in general. We construct different isotopic types of irreducible decomposition.  Additional levels of decomposition are derived by use of traces (contractions) taken relatively to the metric tensor or permutation pseudo-tensor. In Sect. 5, we apply the general scheme to two most known classes of restricted 3-rd order tensors: the piezoelectric tensor and the  Hall-effect tensor. In Conclusion section, we briefly discuss our results and propose some possible applications.

Notations: We denote third- and second-order tensors by capital Latin letters and strictly distinguish between the upper and lower indices. The  indices take the values  from the range  $i,j,k,\cdots=1,2,3$.  Einstein's summation rule for a pair of indices is assumed everywhere. The symmetrization of a tensor is denoted by $T^{(ij)}=(1/2)(T^{ij}+T^{ji})$, while the antisymmetrization is presented as $T^{[ij]}=(1/2)(T^{ij}-T^{ji})$. The vector spaces corresponding to  specific tensors (tensor spaces) are  denoted by the same letter in the bold font. 

\section{Third-order tensors and their transformation groups}
In this section, we briefly recall some definitions and  notations relevant to tensor algebra. We also discuss  transformation groups  related to various   geometric structures defined on the basic vector space and their relation to 3 dimension tensors. 
\subsection{Third-order tensors}
Let $V$  denotes a 3-dimensional vector space over the real number field $\mathbb R$ with the dual vector space  $V^*$. Most applications in solid-state physics deal with the spaces $V$ and $V^*$ that are isomorphic to the ordinary 3-dimensional space, $V\cong V^*\cong \mathbb R^3$.  Denote a basis of $V$ as $e_i$ and the dual basis of $V^*$ as $\vt^i$. 
The duality of these two bases is defined  by operator relations, as follows 
\begin{equation}
    e_i(\vt^j)=\d^j_i\qquad \mathrm{and}\qquad \vt^i(e_j)=\d^i_j\,,
\end{equation}
where $\d^i_j$ denotes Kronecker's tensor (the set of components of the unit matrix). 
In these bases, the representations of a vector $v\in V$ and  a functional (1-form) $\o\in V^*$ are given by 
\begin{equation}\label{vect-form}
    v=v^ie_i,\qquad \mathrm{and}\qquad\o=\o_i\vt^i\,.
\end{equation}
Here and in the sequel, we use  Einstein's summation rule for two same-named indices in the upper and down positions. In 3-dimensional space, the indices change in the range $i,j,\cdots=1,2,3$.  
The numerical values of the components $v^i$  and $\o_i$ are defined module the transformations of the basis. 
Let the basis $e_i$ of $V$ be transformed into an arbitrary  new basis $e_{i'}$, then the basis $\vt^i$ of $V^*$ is transformed  into the unique new basis $\vt^{i'}$ due to the standard  tensor  law:
\begin{equation}\label{basis-trans}
    e_i \to e_{i'}=R^i_{i'}e_i \qquad \mathrm{and}\qquad
    \vt^i \to \vt^{i'}=R^{i'}_i\vt^i\,. 
\end{equation}
The transformation matrices $R^i_{i'}$ and $R^{i'}_i$ are assumed to be  inverse one to another. In tensor form, it is expressed as 
\begin{equation}
    R^i_{i'}R^{i'}_j=\d^i_j,\qquad\mathrm{and}\qquad R^i_{i'}R^{j'}_i=\d^{j'}_{i'}. 
\end{equation}
Under the basis transformations (\ref{basis-trans}), the components of the  vector $v$ and  the $1-$form $\o$ are transformed respectively as 
\begin{equation}
    v^i\to v^{i'}=R^{i'}_iv^i\,,\qquad 
    \mathrm{and}\qquad \o_i\to\o_{i'}=R^i_{i'}\o_i\,.
\end{equation}
So in (\ref{vect-form}), the vector $v$ and the form $\o$ themselves  are  invariant under arbitrary linear transformations of the basis. 

Tensors are defined as multilinear map from the Cartesian product of  vector spaces into the field $\mathbb R$.  
In particular, the {\it covariant 3-rd order tensor } is defined as 
\begin{equation}
    T:V^*\times V^*\times V^*\to\mathbb R\,.
\end{equation}
With respect to the basis $e_i$, the tensor $T$ is represented by a set of 27 real components  $T^{ijk}$ such that  
\begin{equation}
    T=T^{ijk}e_i\otimes e_j\otimes e_k\,,
\end{equation}
where the tensor product notation $\otimes$  is used.
Under the basis transformations (\ref{basis-trans}), the components of the tensor are transformed as 
\begin{equation}
    T^{ijk}\to T^{i'j'k'}=
    R^{i'}_{i}R^{j'}_{j}R^{k'}_{k}T^{ijk}\,.
\end{equation}
Similarly, the {\it contravariant 3-rd order tensor } is defined as a multilinear map
\begin{equation}
    T:V\times V\times V\to\mathbb R\,.
\end{equation}
Relatively to the basis $\vt^i$ of the dual space $V^*$, this tensor is represented by a set of 27 real components $T_{ijk}$,  as well,  
\begin{equation}
    T=T_{ijk}\vt^i\otimes \vt^j\otimes \vt^k\,.
\end{equation}
Under the linear transformation (\ref{basis-trans}) of the basis, the components of the contravariant tensor are transformed as
\begin{equation}
    T_{ijk}\to T_{i'j'k'}=
    R_{i'}^{i}R_{j'}^{j}R_{k'}^{k}T_{ijk}\,.
\end{equation}
Another class of so-called {\it mixed-type tensors} is defined on the  Cartesian  product of the vector spaces $V$ and $V^*$. For instance, the linear map 
\begin{equation}
    T:V^*\times V\times V\to\mathbb R
\end{equation}
introduces a tensor 
\begin{equation}
    T=T^i{}_{jk}e_i\otimes \vt^j\otimes \vt^k\,
\end{equation}
with a ``mixed" transformation law of the components
\begin{equation}
    T^i{}_{jk}\to T^{i'}{}_{j'k'}=
R^{i'}_{i}R_{j'}^{j}R_{k'}^{k}T^i{}_{jk}\,.
\end{equation}
Analogously, one can define additional mixed-type tensors, such as $T^i{}_j{}^k$ and so on. 
\subsection{Groups relevant for   decomposition of tensors}
 Since the Cartesian product of vector spaces  is not  abelian, the group of permutations (the symmetry group) acted on the set of indices is relevant for the algebra of tensors. 
In index notations, this group, $S_n$,  acts on the components of  a $n$-th order tensor  by permutation of its indices. 
For the $3-$order tensors, we are dealing with the symmetry (permutations) group $S_3$. It is a finite group of  6 independent elements:
\begin{equation}
    S_3=\{ I, (12), (23), (13), (123), (132)\}
\end{equation}
Here the cycle  notations are used.  The element $I$ denotes the identity permutation The  element $(12)$ means a permutation that exchanges  the first and the second indices.   The elements $(23)$ and  $(13)$ are defined similarly.  The element $(123)$ means that the first index goes into the second, the second into the third, and the third into the first. The element $(132)$ is defined in the same way.

A linear transformation of a basis of $V$ is naturally translated into the transformation of the tensor space. For a 3-dimensional  space $V$, transformations form the group of invertible $3\times 3$ matrices  
\begin{equation}
    GL(3, \mathbb R)=\{(3\times 3)-{\mathrm {matrices} }\,\, G \,\,{\mathrm {with }} \,\,{\mathrm {det}}\,\, G\ne 0\}\,.
\end{equation}

The relation between the action on tensors of these two groups, the permutation group $S_p$ and the  general linear group $GL(n, \mathbb R)$,  is managed by the Schur-Weyl duality theorem. It states that due to commutativity of  the simultaneous action of the groups  $GL(n, \mathbb R)$  and $S_p$, the decomposition of the tensor space relative to the symmetry group $S_p$ is invariant under the action of the general linear group $GL(n, \mathbb R) $. An effective way to derive the $S_p$ decomposition is to  apply  the Young diagram technique. In our case, we are dealing with the groups $GL(3, \mathbb R) $ and $S_3$. 

For the vector space $V$  endowed with an additional geometric structure, such as a metric,  the group of transformations is restricted to a subgroup of $GL(3, \mathbb R) $ that preserves the geometric   structure. Consequently, an additional decomposition of the tensor space is admissible. In this paper, we study the decomposition under the group $GL(3, \mathbb R) $ and its  subgroups:
\begin{itemize}
    \item Orthogonal group $O(3, \mathbb R)$ that preserves the scalar product structure on $V$;
    \item Special linear group $SL(3, \mathbb R)$ that preserves the volume element, i.e., the orientation structure on $V$;
    \item Special orthogonal group $SO(3, \mathbb R)$ that preserves the scalar product together with the  orientation.
\end{itemize}
 

\section{Irreducible decomposition of tensors}
In a chosen basis, a tensor is presented by a large set of independent components. In order to  characterize the algebraic properties  of a tensor, it is useful to divide this  set of  components  into some smaller subsets, with specific algebraic symmetry for each one of them, as
\begin{equation} \label{dec-cond1}
    T^{i\cdots j} =\sum_{p=1}^n{}^{(p)}T^{i\cdots j}  =^{(1)}\!T^{i\cdots j} +^{(2)}\!T^{i\cdots j} +\cdots+^{(p)}\!T^{i\cdots j} +\cdots+^{(n)}\!T^{i\cdots j}\,.
\end{equation}
Such a procedure is called  {\it decomposition  of a tensor}. In order to get an algebraically  meaningful decomposition, we would like   the following conditions to be satisfied:
\begin{itemize}
    \item [(1)] {\it Covariance:} All sub-tensors $^{(p)}\!T^{i\cdots j}$ must be of the same order and of the same shape like the initial tensor $T^{i\cdots j}$ is. In other words, all sub-tensors must have the same number of indices in the same positions.
    \item [(2)] {\it Independence:} The sub-tensors $^{(p)}\!T^{i\cdots j}$ must be linearly independent, i.e.,  any equation of the form  $\sum_{p=1}^n{}\a_p{}^{(p)}T^{i\cdots j}=0$ yields $\a_p=0$ for all $p$.
    \item [(3)] {\it Irreducibility:} The set of sub-tensors $^{(p)}\!T^{i\cdots j}$ must be minimal, i.e., it can not be decomposed successively into some smaller set of sub-tensors.
    \item [(4)] {\it Uniqueness:} There is no an   alternative decomposition, i.e., if the tensor is written as 
    $T^{i\cdots j} =\sum_{p=1}^n{}^{(p)}\check{T}^{i\cdots j}$, in an  addition to (\ref{dec-cond1}), then ${}^{(p)}\check{T}^{i\cdots j}={}^{(p)}{T}^{i\cdots j}$ for all $p$. 
\end{itemize}

Tensors of a specific order form a  vector space of themselves, which can be called a {\it tensor space}. We denote the tensor spaces of the original tensor ${T}^{i\cdots j}$ and of its sub-tensors ${}^{(p)}{T}^{i\cdots j}$ by ${\bf T}$ and ${}^{(p)}\!{\bf T}$, correspondingly. The conditions above can be reformulated in term of tensor subspaces as follows:
\begin{itemize}
    \item [(1)] {\it Covariance:} The subspaces ${}^{(p)}{\bf T}$ are invariant: It means that they are preserved under all prescribed  transformations. 
    \item [(2)]{\it Independence:} The intersections of the subspaces are  trivial: For $p\ne q$, 
    \begin{equation}
    {}^{(p)}{\bf T}\cap{}^{(q)}{\bf T}=\{0\}\,.
\end{equation}
    \item [(3)]{\it Irreducibility:} The subspaces ${}^{(p)}{\bf T}$  are  minimal, i.e., they do not contain any smaller non-zero invariant subspace. 
    \item [(4)] {\it Uniqueness:} The subspaces ${}^{(p)}{\bf T}$ are unique  (up to isomorphism). 
\end{itemize}

Relaying on the conditions above, the decomposition of the tensor (\ref{dec-cond1})  is presented now as a resolution into the {\it direct sum}  of the tensor space ${\bf T}$
\begin{equation} \label{dec-cond2}
    {\bf T} =\bigoplus_{p=1}^n{}^{(p)}{\bf T} =^{(1)}\!{\bf T} \oplus\cdots\oplus^{(p)}\!{\bf T} \oplus\cdots\oplus^{(n)}\!{\bf T}\,.
\end{equation}
In particular, the dimensions of the subspaces satisfy 
\begin{equation}
    {\rm dim}({\bf T})=
    \sum_{p=1}^n{\rm dim}\left({}^{(p)}{\bf T}\right)\,.
\end{equation}

In order to clarify these issues, we present  some simple examples. 
For a general (asymmetric) $2-$nd order covariant  tensor $T^{ij}$, the decomposition into symmetric and skew-symmetric parts 
\begin{equation}
    T^{ij}=S^{ij}+A^{ij}
\end{equation}
where
\begin{equation}
    S^{ij}=T^{(ij)}:=\frac 12 \left(T^{ij}+T^{ji}\right),\qquad 
    A^{ij}=T^{[ij]}:=\frac 12 \left(T^{ij}-T^{ji}\right)
\end{equation}
is unique and irreducible under the action of the group $GL(3,\mathbb R)$. The dimension of the total tensor space  is distributed among the subspaces as $9=6+3$.  

Another example is a mixed-type second order tensor $T^i{}_j$. It is decomposed uniquely and $GL(3,\mathbb R)$-irreducibly into the scalar  and traceless parts, respectively,  
\begin{equation}
    T^i{}_j= {}^{(1)}T^i{}_j+{}^{(2)}T^i{}_j\,,
\end{equation}
where
\begin{equation}
    {}^{(1)}T^i{}_j=\frac 13 T^m{}_m\d^i_j\,,\qquad {}^{(2)}T^i{}_i=0\,. 
\end{equation}
The dimension of the total space is distributed now as $9=1+8$.  

When  both tensors $T^{ij}$ and $T^i{}_j$  are considered relative to the sub-group $O(3,\mathbb R)$, the decompositions given above turn our to be  reducible. A finer decomposition  can be derived by the use of the metric tensor $g_{ij}={\rm diag}(1,1,1)$. For  $T^{ij}$, it takes the form
\begin{equation}
    T^{ij}={}^{(1)}T^{ij}+{}^{(2)}T^{ij}+{}^{(3)}T^{ij}\,,
\end{equation}
where 
\begin{equation}
    {}^{(1)}T^{ij}=\frac 13g_{mn}T^{mn}g^{ij},\qquad
    {}^{(2)}T^{ij}=T^{(ij)}-{}^{(1)}T^{ij},\qquad 
    {}^{(3)}T^{ij}=T^{[ij]}\,.
\end{equation}
Here the dimension of the total space is  distributed, as $9=1+5+3$,  respectively. This decomposition is unique and $O(3,\mathbb R)$-irreducible.

For  higher order tensors the situation turns out to be much more complicated.  Recall the well-known fact, see e.g.  \cite{Landsberg}:

\noindent 
{\it For a general tensor of order greater than 2, there is no unique irreducible decomposition. In other words, the unique decomposition is reducible, while the irreducible decomposition is not unique.  }

\noindent We demonstrate these properties  explicitly in the next section.

\section{Decomposition of a covariant 3-rd order tensor}
\subsection{$GL(3, \mathbb R)$-decomposition} 
In this section, we consider a bare vector space $V$ without any additional structure. In this case, the  decomposition of the tensor $T^{ijk}$ has to be invariant under  arbitrary invertible transformation of the basis in $V$. 
Due to the Schur-Weyl duality theorem, such $GL(3, \mathbb R)$-decomposition  is equivalent to  the decomposition of $T^{ijk}$ under the permutation group $S_3$.
  \subsubsection{Straightforward decomposition}
In three-dimensional space $V$, a general (non-restricted) third-order tensor $T^{ijk}$ has $3^3=27$ independent components. 
We are looking for an irreducible $GL(3, \mathbb R)$-invariant decomposition of this tensor. Moreover, it is plausible to have a unique decomposition.  
First, we observe that $T^{ijk}$ can be readily decomposed into the sum of three independent $GL(3, \mathbb R)$-invariant parts. 
Indeed, it is enough to define the totally symmetric part
\begin{equation}\label{eq5}
S^{ijk}= T^{(ijk)}:=\frac{1}{3!}\left(T^{ijk}+T^{jki}+T^{kij}+T^{jik}+T^{kji}+T^{ikj}\right)\,,
\end{equation}
and the totally skew-symmetric part
\begin{equation}\label{eq6}
A^{ijk}=T^{[ijk]}:=\frac{1}{3!}\left(T^{ijk}+T^{jki}+T^{kij}-T^{jik}-T^{kji}-T^{ikj}\right)\,.
\end{equation}
Now we extract these two parts from $T^{ijk}$ to obtain the residue part $N^{ijk}$  given by 
\begin{equation}\label{eq7}
    N^{ijk}:=T^{ijk}-S^{ijk}-A^{ijk}=\frac{1}{3}\left(2T^{ijk}-T^{jki}-T^{kij}\right)\,.
\end{equation}
Consequently, we obtain an invariant   decomposition of $T^{ijk}$ into three invariant  parts 
\begin{equation}\label{decomp1}
    T^{ijk}=S^{ijk}+A^{ijk}+N^{ijk}\,.
\end{equation}
The linear independence of these three  sub-tensors  follows from  the symmetry relations 
\begin{equation}\label{N-sym}
    N^{(ijk)}= A^{(ijk)} =0\,,
\qquad 
{\rm and}\qquad 
   N^{[ijk]}=S^{[ijk]}=0\,.
\end{equation}
These symmetry relations also show that the subspaces corresponding to the $S$, $A$, and $N$ tensors are mutually disjoint (up to the zero tensor).

The decomposition (\ref{decomp1}) is invariant under the action of the $GL(3, \mathbb R)$ group. It means that the symmetries of the sub-tensors in (\ref{decomp1}) are preserved under any linear transformation of the basis. In particular, the transformed tensor  $S^{i'j'k'}=R^{i'}_iR^{j'}_jR^{k'}_kS^{ijk}$ is totally symmetric while the transformed tensor $A^{i'j'k'}=R^{i'}_iR^{j'}_jR^{k'}_kA^{ijk}$ is totally skew-symmetric.  Also the residue tensor $N^{ijk}$ preserves its symmetries (\ref{N-sym}) under these  transformations.  Moreover, we observe  that the tensors  $S^{ijk}$, $A^{ijk}$ and $N^{ijk}$ are  defined uniquely. 

Consequently, the tensor space of the third-order tensors is decomposed now into the direct sum of three subspaces 
\begin{equation}
        {\bf T}={\bf S}\oplus{\bf A} \oplus {\bf N}\,.
    \end{equation}
Hence, the dimension of the total space is distributed among these subspaces  as follows
\begin{equation}
    27=10+ 1+ 16\,.
\end{equation}

It is clear, that the sub-tensors  $S^{ijk}$, and $A^{ijk}$ are irreducible. Indeed, any symmetrization (or antisymmetrization)  of the indices in $S^{ijk}$ and $A^{ijk}$ preserves them (up to the total sign) or gives zero. The sub-tensor $N^{ijk}$, however, is reducible. For instance, one can decompose it into the sum of two non-zero parts  $N^{(ij)k}+N^{[ij]k}$. Whether this decomposition is invariant? Is it minimal?
How the invariant and minimal decomposition of $N^{ijk}$ can be derived? We  discuss these issues in the sequel.

\subsubsection{Young's diagrams}
To have an invariant irreducible decomposition of $N^{ijk}$, we apply the machinery of group theory. 
The technical tool used in the symmetry group decomposition is based on Young's diagrams. 
The symmetry group that is relevant to the decomposition of third-order tensors in a  space of an arbitrary dimension is $S_3$. 
For a generic tensor $T_{ijk}$ of 27 independent components, there are three different Young's diagrams depicted as: 
\begin{equation}
    {\Yvcentermath1
 \lambda_1= \yng(3) ~,
\qquad \lambda_2=\yng(1,1,1) ~,
\qquad \lambda_3=\yng(2,1) ~
 }.
\end{equation}
 Each diagram $\lambda$ has a corresponding dimension (the number of standard Young tableaux whose shape is a given Young diagram). This number can be calculated  due to the so-called {\it hook formula} 
\begin{align}
\begin{split}
\text{dim} ~ \lambda &=  \frac{3!}{\prod\limits_{(\a,\b) \in \lambda} \text{hook}(\a,\b)}.
 \label{eq:dimension}
\end{split}
\end{align}
Here, the pair of numbers $(\a,\b)$ denotes the position of a cell in the diagram: $\a$ for the row and $\b$ for the  column. 
For a cell $(\a,\b)$ in the diagram of a shape $\lambda$, the natural number  ``$\text{hook}(\a,\b)$" is defined as the number of boxes that are in the same row to the right of it plus those boxes in the same column below it, plus one (for the box itself). It is called {\it hook length}.  
Using Eq.(\ref{eq:dimension}), we obtain:
\begin{equation}
 \text{dim} ~ \lambda_1 = 1, \qquad   \text{dim} ~ \lambda_2 = 1, \qquad \text{dim} ~ \lambda_3 = 2.
\end{equation}
The corresponding Young's tableaux (the diagrams filling with the numbers 1,2,3) 
are given as 
\begin{equation}\label{tableu}
    \lambda_1 : \quad \Yvcentermath1 \young(123)\qquad\qquad \lambda_2 : \quad \Yvcentermath1 \young(1,2,3)\qquad\qquad \lambda_3 : \quad \Yvcentermath1 \young(12,3)\,, \quad \young(13,2) 
\end{equation}
The rows in these tableaux describe the symmetrization operated on the 
corresponding 
indices, while the columns mean the antisymmetrization. 
Consequently, the identity operator $I$ is decomposed into the sum of three Young operators:
\begin{equation}
    I=P_1+P_2+P_3, 
\end{equation}
where the symmetrization operator $P_1$  and antisymmetrization operator $P_2$ are defined, respectively, as  
\begin{equation}\label{eq.pal}
\begin{aligned}
P_1=I+(12)+(13)+(23)+(123)+(321)\,,\\
P_2=I-(12)-(13)-(23)+(123)+(321)\,.
\end{aligned}
\end{equation}
The operator $P_3$ is represented as the sum of two operators described by the two tableaux $\lambda_3$ given in (\ref{tableu})
\begin{equation}\label{eq2}
P_3=P_{3,1}+P_{3,2}\,.
\end{equation}
Due to the rules mentioned above, the  explicit expressions of these operators are given by 
\begin{equation}\label{eq3}
\begin{aligned}
P_{3,1}=(I+(12))(I-(13))=I+(12)-(13)-(132)\,, \\
P_{3,2}=(I+(13))(I-(12))=I+(13)-(12)-(123)\,.
\end{aligned}
\end{equation}
Here we assumed the rule: the symmetrization operation is applied {\it after } the antisymmetrization one. An alternative order yields two different operators that however equivalent (isotopic) to (\ref{eq3}).  

Hence we obtain the  decomposition of the tensor $T^{ijk}$ into four pieces:
\begin{eqnarray}\label{eq42}
T^{ijk}&=&S^{ijk}+A^{ijk}+N^{ijk}\nonumber\\
&=&S^{ijk}+A^{ijk}+\left(N_1^{ijk}+N_2^{ijk}\right)\,.
\end{eqnarray}
Here $S^{ijk}$ and $A^{ijk}$ are the familiar totally symmetric and totally skew-symmetric parts, (\ref{eq5}) and (\ref{eq6}) respectively. The residue part $N^{ijk}$ given in (\ref{eq7}) is presented now as  a sum of two parts. 
The explicit expressions of these two  sub-tensors are calculated due to the operators (\ref{eq3})  correspondingly to the two latter tableaux in (\ref{tableu}) 
\begin{equation}\label{eq7-0}
N_1^{ijk}=P_{3,1}\left(T^{ijk}\right)=\frac{1}{3 } \left(T^{ijk}+T^{jik}-T^{kji}-T^{kij}\right)
\end{equation}
and
\begin{equation}\label{eq8x}
N_2^{ijk}=P_{3,2}\left(T^{ijk}\right)=\frac{1}{3}
\left(T^{ijk}-T^{jik}+T^{kji}-T^{jki}\right)\,.
\end{equation}
It is clear that these two tensors are expressed via $N^{ijk}$ only, i.e., 
\begin{equation}\label{eq7X}
N_1^{ijk}=\frac{1}{3 } \left(N^{ijk}+N^{jik}-N^{kji}-N^{kij}\right)
\end{equation}
and
\begin{equation}\label{eq8X}
N_2^{ijk}=\frac{1}{3}
\left(N^{ijk}-N^{jik}+N^{kji}-N^{jki}\right)\,.
\end{equation}
Notice the symmetries of these two tensors following immediately from Eqs.(\ref{eq3})
\begin{equation}
    N_1^{ijk}=N_1^{jik}\quad {\rm and}\quad  N_2^{ijk}=N_2^{kji}\,.
\end{equation}
\subsubsection{Alternative decomposition}
We can easily see that the decomposition (\ref{eq42}) is not unique. 
For example, if we assume an opposite order rule: the antisymmetrization  applied after the symmetrization, we would  obtain
\begin{equation}\label{N2-decomp}
N^{ijk}=\tilde{N}_1^{ijk}+\tilde{N}_2^{ijk}\,.
\end{equation}
The corresponding operators are
\begin{equation}\label{eq3xx}
\begin{aligned}
\tilde{P}^1_3=(I+(23))(I-(12))=I+(23)-(12)-(132)\,. \\
\tilde{P}^2_3=(I+(12))(I-(23))=I+(12)-(23)-(123)\,.
\end{aligned}
\end{equation}
Consequently, these two alternative sub-tensors are expressed as
\begin{equation}\label{eq7AA}
\tilde{N}_1^{ijk}=\frac{1}{3 }\left(T^{ijk}-T^{jik}+T^{ikj}-T^{kij}\right)=\frac{1}{3 }\left(N^{ijk}-N^{jik}+N^{ikj}-N^{kij}\right)\,, 
\end{equation}
and
\begin{equation}\label{eq8a}
\tilde{N}_2^{ijk}=\frac{1}{3}\left(T^{ijk}-T^{ikj}+T^{jik}-T^{jki}\right)=\frac{1}{3}\left(N^{ijk}-N^{ikj}+N^{jik}-N^{jki}\right)\,.
\end{equation}
The $\tilde{N}$-decomposition is not the only possible alternative to the $N$-decomposition. Indeed, we can use  various sequences of the permutation operators to produce different  decompositions. 
Particularly, we  use in the sequel a decomposition based on the operators 
\begin{equation}\label{eq3x}
\begin{aligned}
\hat{P}^1_3=(I-(23))(I+(12))= I-(23)+(12)-(132)\,, \\
\hat{P}^2_3=(I-(12))(I+(23))=I-(12)+(23)-(123)\,.
\end{aligned}
\end{equation}
The corresponding subtensors are
\begin{equation}\label{eq7A}
\hat{N}_1^{ijk}=\frac{1}{3}(T^{ijk}-T^{ikj}+T^{jik}-T^{kij})=\frac{1}{3}(N^{ijk}-N^{ikj}+N^{jik}-N^{kij})\,,
\end{equation}
and
\begin{equation}\label{eq8b}
\hat{N}_2^{ijk}=\frac{1}{3}(T^{ijk}-T^{jik}+T^{ikj}-T^{jki})=\frac{1}{3}(N^{ijk}-N^{jik}+N^{ikj}-N^{jki})\,.
\end{equation}
 All these {\it isotopic decompositions} of the tensor $N^{ijk}$ are completely equivalent. All of them are irreducible and invariant under the $GL(3,{\mathbb R})$ transformations of the basis.  
For higher-order tensors, the non-uniqueness of the irreducible decomposition is a well-established fact, see e.g.  \cite{Landsberg}.
\subsubsection{Tensor space decomposition}
 With respect to the  decomposition (\ref{eq42}),   the tensor space ${\bf T}$ is decomposed into the direct sum of the subspaces  ${}^{(p)}\!{\bf T}$ that  correspond to the tableaux $\lambda_p$ given in (\ref{tableu}). 
The dimensions of these subspaces 
can be calculated  by the following combinatorics   formula, see. e.g. \cite{Cvitanovic}, 
    \begin{align}
\begin{split}
\text{dim} ~ {}^{(p)}\!{\bf T} &= \prod\limits_{(\alpha,\beta) \in \lambda_p} \frac{3 + \beta - \alpha}{\text{hook}(\alpha,\beta)}  . \label{eq:dimension1}
\end{split}
\end{align}
 Recall that the pair of integers $(\alpha,\beta)$ numerates a cell of the diagram standing in the $\alpha$'s raw and $\beta$'s column. The summation is provided for all cells of the corresponding diagram. 
Accordingly,  we calculate
\begin{equation}
\text{dim}\,{\bf S} = 10,\qquad 
\text{dim}\,{\bf A} =  1 , \qquad
\text{dim}\,{\bf N}_1=\text{dim}\,{\bf N}_2=8\,.
\end{equation}
Using the consideration above we derive the geometrical meaning of the  decomposition (\ref{tableu}):

\begin{prop}
Let a third-order covariant tensor $T^{ijk}$ and the corresponding tensor space ${\bf T}$ be given. 
\begin{itemize}
    \item The tensor space $\bf T$ is decomposed  into the direct sum of three subspaces 
    \begin{equation}\label{dec1}
        {\bf T}={\bf S}\oplus{\bf A} \oplus {\bf N}\,.
    \end{equation}
    This decomposition is uniquely but reducible. 
 \item  The subspace ${\bf N}$ is   decomposed additionally into the direct sum of two smaller subspaces 
    \begin{equation}\label{dec2}
       {\bf N}= {\bf N}_1\oplus{\bf N}_2\,.
    \end{equation}
    This decomposition is irreducible but not unique. There are isotopic subspaces ${\bf N}_1, {\tilde{\bf N}}_1, {\check{\bf N}}_1$ and so on. 
    \item 
The dimension of the total tensor space is distributed among the subspaces accordingly to:
\begin{equation}\label{dec3}
    3 \times 3 \times 3  = 10 + 1 + (8+ 8)\,.
\end{equation}
\item The subspases ${\bf S}$, ${\bf A}$,   and ${\bf N}_1$, ${\bf N}_2$ are invariant  under $GL(3,\mathbb R)$ transformations of the basis.
\end{itemize}
\end{prop}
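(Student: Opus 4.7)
The plan is to verify the four assertions in turn, leveraging machinery already assembled in the preceding subsections: the explicit Young projectors $P_1, P_2, P_3$ of (\ref{eq.pal})--(\ref{eq3}), the hook-length dimension formula (\ref{eq:dimension1}), and Schur--Weyl duality. Existence of the coarse decomposition (\ref{dec1}) is immediate from the identities defining $S^{ijk}, A^{ijk}, N^{ijk}$. For uniqueness, I would observe that each summand is produced from $T^{ijk}$ by a fixed rational combination of index permutations, so the decomposition is determined componentwise. Linear independence of $\mathbf{S}, \mathbf{A}, \mathbf{N}$ follows from the symmetry identities (\ref{N-sym}): applying the total symmetrizer and the total antisymmetrizer to any purported relation $\alpha S^{ijk} + \beta A^{ijk} + \gamma N^{ijk} = 0$ isolates each summand in turn and forces all coefficients to vanish. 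Reducibility of $\mathbf{N}$ is witnessed concretely by the splitting $N^{ijk} = N_1^{ijk} + N_2^{ijk}$ of (\ref{eq42}), once one checks that $N_1$ and $N_2$ are nonzero in general, which is visible from their distinct two-index symmetries $N_1^{ijk} = N_1^{jik}$ and $N_2^{ijk} = N_2^{kji}$.

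Next I would turn to the finer decomposition (\ref{dec2}). Irreducibility is the substantive step: I would argue that the operators $P_{3,1}, P_{3,2}$ act as primitive idempotents (up to normalization) in the group algebra of $S_3$, so that their images carry the two-dimensional irreducible $S_3$-representation indexed by $\lambda_3$; passing through Schur--Weyl duality, the corresponding $GL(3,\mathbb{R})$-modules in $V^{\otimes 3}$ are likewise irreducible, hence minimal. Non-uniqueness is then established constructively by pointing to the alternative projector pairs $\tilde{P}^1_3, \tilde{P}^2_3$ of (\ref{eq3xx}) and $\hat{P}^1_3, \hat{P}^2_3$ of (\ref{eq3x}), each of which partitions $\mathbf{N}$ into two distinct but isomorphic invariant summands, producing the isotopic families ${\bf N}_1, \tilde{\bf N}_1, \hat{\bf N}_1, \ldots$

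For the dimensional identity (\ref{dec3}), I would apply the hook-length prescription (\ref{eq:dimension1}) to the three Young diagrams $\lambda_1, \lambda_2, \lambda_3$, obtaining dimensions $10$, $1$, and $8$ respectively. The diagram $\lambda_3$ contributes two standard tableaux, so the total count is $10 + 1 + 8 + 8 = 27 = 3^3$, which simultaneously certifies completeness and the absence of further hidden summands. Finally, $GL(3,\mathbb{R})$-invariance of each subspace follows at once from Schur--Weyl duality: every projector constructed from index permutations commutes with the diagonal $GL(3,\mathbb{R})$-action, so images of these projectors are automatically invariant.

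The main obstacle I foresee is the \emph{irreducibility} claim for $\mathbf{N}_1$ and $\mathbf{N}_2$: exhibiting Young projectors merely produces invariant subspaces, whereas minimality requires identifying their images with genuine irreducible constituents of the $S_3$-action and then transporting this minimality to the $GL(3,\mathbb{R})$ side. A clean proof must therefore appeal to the full representation theory of $S_3$ --- the three irreducibles indexed by the three partitions of $3$ --- rather than rely on tableau bookkeeping alone, and should verify that the two standard tableaux of shape $\lambda_3$ furnish two distinct isotypic copies of the same irreducible, which is precisely what explains both the irreducibility and the non-uniqueness observed in (\ref{dec2}).
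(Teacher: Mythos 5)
Your proposal is correct and follows essentially the same route as the paper, which establishes this proposition not by a separate formal proof but by the preceding constructions: the explicit $S$/$A$/$N$ splitting with the symmetry relations (\ref{N-sym}) for independence and uniqueness, the Young operators $P_{3,1},P_{3,2}$ and their alternatives for the irreducible but non-unique refinement of ${\bf N}$, the hook-length formula for the dimension count $27=10+1+8+8$, and Schur--Weyl duality for $GL(3,\mathbb R)$-invariance. Your closing remark that genuine irreducibility of ${\bf N}_1,{\bf N}_2$ requires the representation theory of $S_3$ rather than tableau bookkeeping alone is a fair point on which you are, if anything, slightly more careful than the paper.
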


\subsection{$O(3,\mathbb R)$-decomposition}
\subsubsection{Metric tensor}
In this section, we consider the vector space $V$  endowed with the scalar product structure. For arbitrary pair of vectors $x, y\in V$, the scalar product is defined by the use of an invertible matrix $g_{ij}$.
In a chosen basis $e_i$, the vectors are represented as  $x=x^ie_i$ and $y=y^ie_i$ while the scalar product is given by 
\begin{equation}
    (x,y)=g_{ij}x^iy^j\,.
\end{equation}
 In a transformed basis $e_{i'}=R_{i'}^ie_i$, it is expressed as 
\begin{equation}
    (x,y)=g_{i'j'}x^{i'}y^{j'}=
\left(g_{i'j'}R^{i'}_iR^{j'}_j\right)x^{i}y^{j}\,.
\end{equation}
Thus the matrix $g_{ij}$ is a tensor with the transformation law
\begin{equation}
    g_{ij}\to g_{i'j'}=R^i_{i'}R^j_{j'}g_{ij}\,.
\end{equation}
In Euclidean space, the metric tensor has a fixed diagonal form $g_{ij}={\rm diag}(1,1,1)$. Such a metric is preserved under the transformations of basis restricted to a subgroup of $GL(3,{\mathbb R})$ which elements  satisfy the requirement $RR^T=I$. This is the orthogonal group $O(3,{\mathbb R})$. 
\subsubsection{Trace vectors in symmetric and skew-symmetric parts}
We are looking now for a decomposition of the tensor $T^{ijk}$ under the orthogonal group. Using the metric tensor $g_{ij}$, we can construct from $T^{ijk}$ various tensors of smaller orders.  Here we refer to  the sum of tensor components in some two fixed indices, one upper and one lower, as  a {\it trace of a tensor}. 
The traces are tensors by themselves of the rank $p-2$, where $p$ is the rank of the original tensor. 
From the tensor $T^{ijk}$, we can extract three traces as follows 
\begin{equation}\label{3vectors}
 u^k=g_{ij}T^{ijk}\,,\qquad   v^k=g_{ij}T^{ikj}\,,\qquad  w^k=g_{ij}T^{kij}\,.
\end{equation}
They are transformed as vectors under a  transformation from $O(3,{\mathbb R})$. 
For a generic tensor $T^{ijk}$, these three  vectors are linearly independent. Otherwise, we would have a linear relation between the components of  $T^{ijk}$, then  the tensor would not be general. We  discuss such special  tensors in the  sequel. 

In  addition to the decomposition (\ref{eq42}), we are  able  now to derive a successive decomposition of 
the tensor $T^{ijk}$ by using 
the trace vectors  $u^k,v^k,w^k$. 
Let us  start with the symmetric part (\ref{eq5}).  Due to the total symmetry of  $S^{ijk}$, we have  $g_{ij}S^{ijk}=g_{ij}S^{ikj}=g_{ij}S^{kij}$. Thus,  there is only one independent  trace vector that can be constructed from the symmetric part 
\begin{equation}
   \alpha^k= g_{ij}S^{ijk}=\frac{1}{3}(u^k+v^k+w^k)\,.
\end{equation}
Now the tensor $S^{ijk}$ can be decomposed into the sum of two independent symmetric sub-tensors: 
\begin{equation}\label{S-decomp0}
    S^{ijk}=K^{ijk}+R^{ijk}\,,
\end{equation}
where the {\it trace part } is defined as 
    \begin{equation}\label{S-decomp1}
    K^{ijk}=\frac 15\left(\alpha^i g^{jk} +\alpha^j g^{ik}+ \alpha^k g^{ij}\right)\,,
\end{equation}
while $R^{ijk}=S^{ijk}-K^{ijk}$ is the {\it totally traceless part.} The factor $(1/5)$ in (\ref{S-decomp1}) follows from the traceless relations 
\begin{equation}
\label{trace}
  g_{ij}R^{ijk}=0, \quad  g_{ik}R^{ijk}=0,   \quad g_{jk}R^{ijk}=0 \,.
\end{equation}
In term of the corresponding tensor spaces, Eq.(\ref{S-decomp0})  means the decomposition of the spase $\bf{S}$ into a direct sum of two invariant subspaces
\begin{equation}\label{S-decomp3}
    \bf{S}=\bf{K}\oplus\bf{R}
\end{equation}
with the dimension reduction
\begin{equation}\label{S-decomp3x}
    10=3+ 7\,.
\end{equation}

The second  part $A^{ijk}$ of the tensor $T^{ijk}$ is totally skew-symmetric. All its traces are zero so it does not have any invariant subspaces. It is trivial since the corresponding space $\bf{A}$ is 1-dimensional. 

\subsubsection{Trace vectors and decomposition of the mixed-symmetry part}
We consider now the third part $N^{ijk}$.  It has two invariant subspaces $N_1^{ijk}$ and $N_2^{ijk}$, that are useful to consider separately. 
Due to the symmetry $N_1^{ijk}=N_1^{jik}$,  there are  two different  trace  vectors 
\begin{equation}
g_{ij}N_1^{ijk}=\frac{2}{3}(u^k-w^k)\,,
\qquad
{\rm and}\qquad  
g_{ij}N_1^{ikj}=\frac{1}{3}(w^k-u^k)\,.
\end{equation}
These vectors are linearly dependent,  
so we are left with  only one independent vector 
\begin{equation}
    \beta^k=g_{ij}N_1^{ijk}=\frac{2}{3}(u^k-w^k)
\end{equation}
only.  
Now the tensor $N_1^{ijk}$ can be decomposed into the sum of two independent parts -- the trace and the traceless one
\begin{equation}
    N_1^{ijk}=M_1^{ijk}+P_1^{ijk}\,. 
\end{equation} 
Due to the symmetry of $N_1^{ijk}$ , the general form of the trace part can be  considered as 
\begin{equation}\label{m11}
M_1^{ijk}=x(\beta^i g^{jk}+\beta^j g^{ik})+y\beta^k g^{ij}\,.
\end{equation}
We require the  residue part $P_1^{ijk}$ to be totally traceless, i.e., 
\begin{equation}\label{trace11}
  g_{ij}P_1^{ijk}=0,\qquad g_{ik}P_1^{ijk}=0,\qquad g_{jk}P_1^{ijk}=0\,.
\end{equation}
From these linear equations, it follows that $x=-(1/4)$ and $y=1/2$.
Hence, we derived two parts of the tensor $N_1^{ijk}$. Explicitly, 
\begin{equation}\label{m11x}
M_1^{ijk}=\frac{1}{4}\left(2\beta^k g^{ij}-\beta^i g^{jk}-\beta^j g^{ik}\right)\,,
\end{equation}
and 
\begin{equation}\label{m11xx}
P_1^{ijk}=N_1^{ijk}-M_1^{ijk}\,.
\end{equation}
Accordingly, we have a direct sum decomposition of the  subspace ${\bf N}_1$ into the sum of two  subspaces with the  corresponding dimensions 
\begin{equation}
    {\bf N}_1={\bf M}_1\oplus {\bf P}_1\,, \qquad\qquad 8=3\oplus 5\,.
\end{equation}

 For the tensor $N_2^{ijk}$ with the symmetry $N_2^{ijk}=N_2^{kji}$, we can  define two trace vectors 
 \begin{equation}
g_{ij}N_2^{ikj}=\frac{2}{3}(v^k-w^k)\,,
\qquad {\rm and}\qquad
g_{ij}N_2^{ijk}=\frac{1}{3}(w^k-v^k)\,,
\end{equation}
which turn out to be linearly dependent.  
We choose an independent vector
\begin{equation}
    \gamma^k=
g_{ij}N_2^{ikj}=\frac{2}{3}(v^k-w^k)\,.
\end{equation}
Now, $N_2^{ijk}$ can be decomposed into the sum of two independent parts:
\begin{equation}
    N_2^{ijk}=M_2^{ijk}+P_2^{ijk}\,,
\end{equation}
where, similarly to $M_1^{ijk}$, 
we define  the trace part 
\begin{equation}
\label{m22}
 M_2^{ijk}=\frac{1}{4}\left(2\gamma^j g^{ik}-\gamma^i g^{jk}-\gamma^k g^{ij}\right)\,,
\end{equation}
while $P_2^{ijk}=N_2^{ijk}-M_2^{ijk}$ is a traceless tensor, i.e., 
\begin{equation}
\label{trace12}
  g_{ij}P_2^{ijk}=0,\qquad g_{ik}P_2^{ijk}=0,\qquad g_{jk}P_2^{ijk}=0\,.
\end{equation}
Consequently, we have the direct sum decomposition of the  subspace ${\bf N}_2$ into the sum of two  subspaces with the  corresponding dimensions 
\begin{equation}
    {\bf N}_2={\bf M}_2\oplus {\bf P}_2\,, \qquad\qquad 8=3\oplus 5
\end{equation}

Recall that the decomposition of the tensor $N^{ijk}$ into two parts $N_1^{ijk}$ and $N_2^{ijk}$ is not unique but irreducible.  Its trace decomposition, however, is unique but reducible. Indeed, it is enough to define 
\begin{equation}\label{M+p-tensor}
    N^{ijk}=M^{ijk}+P^{ijk}\,,
\end{equation}
where the tensor $M^{ijk}=M_1^{ijk}+M_2^{ijk}$ reads
\begin{equation}\label{totalM}
    M^{ijk}=\frac{1}{4}\left((2\beta^k-\gamma^k) g^{ij}-(\beta^i+\gamma^i) g^{jk}+ (2\gamma^j-\beta^j) g^{ik}\right)\,,
\end{equation}
or, equivalently,
\begin{equation}\label{M-tensor}
    M^{ijk}=\frac{1}{6}\left((2u^k-v^k-w^k) g^{ij}+(2w^i-u^i-v^i) g^{jk}+ (2v^j-u^j-w^j) g^{ik}\right)\,.
\end{equation}
This tensor has 6 independent components (the number of  components of two independent vectors).
Since the tensors $M^{ijk}$ and $P^{ijk}$ are  independent, the second traceless tensor 
\begin{equation}
    P^{ijk}=P_1^{ijk}+P_2^{ijk}
\end{equation}
is left with 10 independent components.  

Accordingly, the metric structure on the vector space $V$ provides a finer decomposition of the tensor $T^{ijk}$. In geometrical description, our results can be formulated as follows:
\begin{prop} Let a general tensor $T^{ijk}$ with the corresponding tensor space ${\bf T}$ be given.
\begin{itemize}
    \item 
 The vector space of the tensor $T^{ijk}$ is decomposed uniquely into the sum of 5  independent invariant subspaces
\begin{eqnarray}\label{2dec1}
    {\bf T}&=&{\bf S}\oplus{\bf A}\oplus{\bf N} \nonumber\\
    &=&({\bf K}\oplus{\bf R})\oplus{\bf A}\oplus({\bf M}\oplus{\bf P})\,.
\end{eqnarray}
\item The dimension of the total space is distributed between these subspaces as follows
\begin{equation}\label{dec3a}
   27  = (3 + 7)+ 1 + (6+ 10)\,.
\end{equation}
\item The subspaces ${\bf M}$ and ${\bf P}$ are decomposed irreducible but not uniquely into the sum of two subspaces with the corresponding dimensions 
\begin{equation}\label{2dec2}
    {\bf M}={\bf M}_1\oplus {\bf M}_2\qquad 6=3+ 3
\end{equation}
and 
\begin{equation}\label{2dec3}
    {\bf P}={\bf P}_1\oplus {\bf P}_2\qquad 10=5 + 5\,.
\end{equation}
\item The decomposition is invariant under the action of the group $O(3,\mathbb R)$.
\end{itemize}
\end{prop}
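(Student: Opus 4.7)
The plan is to leverage Proposition 1 and refine its three $GL(3,\mathbb{R})$-invariant blocks $\mathbf{S}$, $\mathbf{A}$, $\mathbf{N}$ by means of the metric $g_{ij}$. Because any $GL(3,\mathbb{R})$-invariant decomposition is \emph{a fortiori} $O(3,\mathbb{R})$-invariant, the only new ingredient is the metric-based trace operation, which is itself manifestly $O(3,\mathbb{R})$-invariant. The skew block $\mathbf{A}$ is one-dimensional and admits no proper invariant subspace, so it contributes the middle summand of (\ref{2dec1}) unchanged.

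For the totally symmetric block $\mathbf{S}$, I would first observe that total symmetry forces the three traces in (\ref{3vectors}) evaluated on $S^{ijk}$ to coincide, leaving a single vector $\alpha^k$. Writing a totally symmetric trace-carrying Ansatz $K^{ijk}=c(\alpha^i g^{jk}+\alpha^j g^{ik}+\alpha^k g^{ij})$ and demanding $g_{ij}(S^{ijk}-K^{ijk})=0$ pins down $c=1/5$ uniquely. This yields $\mathbf{S}=\mathbf{K}\oplus\mathbf{R}$ with $\dim\mathbf{K}=3$ and $\dim\mathbf{R}=7$.

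For the residue $\mathbf{N}$, I would work separately inside each of the mixed-symmetry components $\mathbf{N}_1,\mathbf{N}_2$ supplied by the Young construction. Inside $\mathbf{N}_1$ the symmetry $N_1^{ijk}=N_1^{jik}$ forces the two a priori distinct traces to be linearly dependent, so only one independent vector $\beta^k$ arises; an analogous reduction produces $\gamma^k$ inside $\mathbf{N}_2$. The trace part $M_1$ is then dictated by the pair symmetry through the Ansatz (\ref{m11}), whose two coefficients are fixed by imposing the three traceless conditions (\ref{trace11}); the analogous construction handles $M_2$. This gives $\dim\mathbf{M}_r=3$ and $\dim\mathbf{P}_r=5$ for $r=1,2$, summing to $\dim\mathbf{M}=6$, $\dim\mathbf{P}=10$, and the total $27=(3+7)+1+(6+10)$.

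Finally, I would address the uniqueness-versus-non-uniqueness assertion. The coarse split $\mathbf{T}=(\mathbf{K}\oplus\mathbf{R})\oplus\mathbf{A}\oplus(\mathbf{M}\oplus\mathbf{P})$ is unique because every projector involved — total symmetrization, total antisymmetrization, and the metric traces with the uniquely determined coefficients above — is intrinsic to the pair $(T,g)$ and requires no further choice. In contrast, the internal splits $\mathbf{M}=\mathbf{M}_1\oplus\mathbf{M}_2$ and $\mathbf{P}=\mathbf{P}_1\oplus\mathbf{P}_2$ inherit the isotopic non-uniqueness of $\mathbf{N}=\mathbf{N}_1\oplus\mathbf{N}_2$ from Proposition 1: each alternative choice of Young projectors (such as those producing $\tilde{N}_r$ or $\hat{N}_r$) yields a correspondingly different pair $(\mathbf{M}_r,\mathbf{P}_r)$. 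The main technical obstacle I anticipate is verifying that the tracelessness system for the trace-part coefficients is consistent — the three linear constraints (\ref{trace11}) on the two-parameter Ansatz (\ref{m11}) must collapse to two independent equations thanks to the symmetry $N_1^{ijk}=N_1^{jik}$, and checking this compatibility is essentially the only genuine calculation in the argument.
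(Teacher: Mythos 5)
Your proposal follows essentially the same route as the paper: extract the trace vectors with $g_{ij}$, split $\mathbf{S}$ into $\mathbf{K}\oplus\mathbf{R}$ with the coefficient $1/5$ fixed by tracelessness, work inside each Young component $\mathbf{N}_1,\mathbf{N}_2$ to get $\mathbf{M}_r\oplus\mathbf{P}_r$ with $8=3+5$, and then sum to obtain $\mathbf{M}\oplus\mathbf{P}$ with $16=6+10$. Your identification of the only genuine computation (the two-parameter Ansatz against the three trace conditions, which collapse to two by the pair symmetry) matches what the paper actually does to find $x=-1/4$, $y=1/2$.

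One step in your uniqueness argument is incomplete as written. You construct $\mathbf{M}$ as $\mathbf{M}_1\oplus\mathbf{M}_2$, where $\mathbf{M}_1,\mathbf{M}_2$ are built from the vectors $\beta^k,\gamma^k$, which in turn depend on the non-unique isotopic choice of $\mathbf{N}_1,\mathbf{N}_2$. To conclude that the coarse split $\mathbf{N}=\mathbf{M}\oplus\mathbf{P}$ is nevertheless unique, you must check that the sum $M^{ijk}=M_1^{ijk}+M_2^{ijk}$ does not depend on that choice; asserting that ``the metric traces are intrinsic'' does not by itself cover this, since the individual projectors onto $\mathbf{M}_1$ and $\mathbf{M}_2$ are not intrinsic. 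The paper closes this by rewriting $M^{ijk}$ entirely in terms of the three canonical traces $u^k,v^k,w^k$ of $T^{ijk}$ (its Eq.~(\ref{M-tensor})), which are choice-independent. Equivalently, you could characterize $\mathbf{M}$ intrinsically as the intersection of $\mathbf{N}$ with the span of tensors of the form $a^ig^{jk}+b^jg^{ik}+c^kg^{ij}$, and $\mathbf{P}$ as the totally traceless part of $\mathbf{N}$; either way, an explicit verification is needed and is short. With that supplied, your argument is correct and coincides with the paper's.
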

\subsubsection{Orthogonality of irreducible parts}
With a metric tensor at hand we can define a  scalar product of two  tensors of the same index structure. For two  third-order tensors $A^{ijk}$ and $B^{ijk}$, it is defined as 
\begin{equation}\label{ort1x}
    (A,B):=A^{ijk}B^{mnp}g_{im}g_{jn}g_{kp}:=A^{ijk}B_{ijk}\,.
\end{equation}
In particular, the scalar square of the tensor $T^{ijk}$ is defined as 
\begin{equation}\label{ort1}
    (T,T):=T^{ijk}T^{mnp}g_{im}g_{jn}g_{kp}=T^{ijk}T_{ijk}\,.
\end{equation}
Since it is non-negative, the Euclidean norm of a tensor  is given in the standard form  $||T||=(T,T)^{1/2}$. 
In Eqs.(\ref{ort1x},\ref{ort1}), we use  the contravariant tensor
\begin{equation}\label{ort2}
T_{ijk}:=g_{im}g_{jn}g_{kp}T^{mnp}\,.
\end{equation}
Notice that for the Euclidean metric $g_{ij}={\rm diag}(1,1,1)$, the tensors $T_{ijk}$ and $T^{ijk}$ have the same numerical components. So they can be referred to as different versions of the same tensor. This assumption is widely used in literature, even without notice that it is true only for Cartesian tensors with the Euclidean metric. 


Let us observe the following properties of the scalar product of tensors that are useful in manipulation with the indices:
\begin{itemize}
    \item For an arbitrary permutation of indices  $\sigma$,
    \begin{equation}\label{perm1}
        (\sigma A,\sigma B)=(A,B)\,.
    \end{equation}
    \item For an arbitrary permutation of indices  $\sigma$ and its inverse $\sigma^{-1}$, 
    \begin{equation}\label{perm2}
        (\sigma A, B)=(A,\sigma^{-1}B)\,.
    \end{equation}
    \item For a linear combination   $P=\sum \a_i\sigma_i$ of  permutations $\sigma_i$ with arbitrary real coefficients $\a_i$,
    \begin{equation}\label{perm3}
        (PA,B)=(\sum \a_i\sigma_iA,B)=(A,\sum \a_i\sigma_i^{-1}B)\,.
    \end{equation}
    In particular, using (\ref{eq.pal}) for the fully symmetric and  skew-symmetric tensors, we have respectively
    \begin{equation}\label{sym-pr3a}
        A^{(ijk)}B_{ijk}=A^{ijk}B_{(ijk)}= A^{(ijk)}B_{(ijk)}
    \end{equation}
    and
    \begin{equation}\label{sym-pr3b}
        A^{[ijk]}B_{ijk}=A^{ijk}B_{[ijk]}= A^{[ijk]}B_{[ijk]}\,.
    \end{equation}
\end{itemize}
Now we are able to formulate the following statement:
\begin{prop}
Let the tensor $T^{ijk}$ be uniquely decomposed as
\begin{eqnarray}\label{pr3a}
     T^{ijk}=S^{ijk}+A^{ijk}+N^{ijk}
\end{eqnarray}\label{ort3}
where 
\begin{equation}\label{pr3b}
    S^{ijk}=K^{ijk}+R^{ijk}\,,\qquad N^{ijk}=M^{ijk}+P^{ijk}\,.
\end{equation}
Then all five sub-tensors given above (and their  corresponding subspaces) are mutually orthogonal one to another. 
\end{prop}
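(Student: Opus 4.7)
The plan is to check each of the ten pairwise products $(X,Y)$ with distinct $X,Y\in\{K,R,A,M,P\}$ by reducing it to one of two elementary tools that follow from (\ref{perm1})--(\ref{perm3}). The first tool (call it Tool~(i)) is: if $X$ is totally symmetric and $Y^{(ijk)}=0$, then $(X,Y)=X^{(ijk)}Y_{ijk}=X^{ijk}Y_{(ijk)}=0$, exactly as in (\ref{sym-pr3a}), with the skew analogue via (\ref{sym-pr3b}). The second tool (Tool~(ii)) is: if $X^{ijk}=a^ig^{jk}+b^jg^{ik}+c^kg^{ij}$ and $Y$ is totally traceless, then $(X,Y)=a^i(g^{jk}Y_{ijk})+b^j(g^{ik}Y_{ijk})+c^k(g^{ij}Y_{ijk})=0$.

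With these in hand, the ten pairs fall into three groups. First, for $(K,A)$ and $(R,A)$ I would apply Tool~(i) directly: $K$ and $R$ inherit full symmetry from $S$, while $A^{(ijk)}=0$ by (\ref{N-sym}). Second, for the within-type pairs $(K,R)$ and $(M,P)$ I would apply Tool~(ii): $K$ has the trace form (\ref{S-decomp1}) and $R$ is totally traceless by (\ref{trace}), while $M$ has the trace form (\ref{totalM}) and $P=P_1+P_2$ is totally traceless by (\ref{trace11})--(\ref{trace12}). Third, for the six cross pairs $(K,M),(K,P),(R,M),(R,P),(A,M),(A,P)$ I would again use Tool~(i), but only after establishing the auxiliary identities $M^{(ijk)}=M^{[ijk]}=0$ and $P^{(ijk)}=P^{[ijk]}=0$ for the individually defined pieces.

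This auxiliary step is essentially the only place where a genuine computation is required, and I expect it to be the main obstacle. I would extract it from (\ref{totalM}): each of its three summands has the shape $v^{\ell}g^{mn}$, so upon total symmetrisation they collapse to the common value $v^{(i}g^{jk)}$, and the coefficients of $\beta^{(i}g^{jk)}$ and $\gamma^{(i}g^{jk)}$ reduce to the brackets $(2-1-1)=0$ and $(-1-1+2)=0$, giving $M^{(ijk)}=0$. Then $P^{(ijk)}=N^{(ijk)}-M^{(ijk)}=0$ by (\ref{N-sym}). For the skew part, the symmetry $g^{mn}=g^{nm}$ annihilates every summand of the form $v^{\ell}g^{mn}$ under total antisymmetrisation, yielding $M^{[ijk]}=0$ and hence $P^{[ijk]}=N^{[ijk]}-M^{[ijk]}=0$. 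Once these four identities are recorded, each of the six cross pairs closes in one line via Tool~(i), and the proposition follows.
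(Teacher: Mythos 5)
Your proposal is correct and follows essentially the same route as the paper: orthogonality between symmetry types via the self-adjointness of the (anti)symmetrization operators, and orthogonality of trace versus traceless pieces via the traceless conditions. The only difference is organizational — the paper disposes of the six cross pairs by noting that $\mathbf{K},\mathbf{R}\subset\mathbf{S}$ and $\mathbf{M},\mathbf{P}\subset\mathbf{N}$ with $\mathbf{S},\mathbf{A},\mathbf{N}$ mutually orthogonal as subspaces, whereas you check them pair by pair after explicitly verifying $M^{(ijk)}=M^{[ijk]}=0$, a detail the paper leaves implicit.
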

\begin{proof}
We provide a proof for two sequel levels of the decomposition. 
First, we prove that the tensors in (\ref{pr3a})
are mutually orthogonal.
The relations
\begin{equation}\label{ort4}
 S^{ijk}A_{ijk}= S^{ijk}N_{ijk} =A^{ijk}N_{ijk} =0
\end{equation}
follow immediately from Eqs.(\ref{sym-pr3a},\ref{sym-pr3b}). 
For instance,
\begin{equation}
    S^{ijk}N_{ijk}= S^{(ijk)}N_{ijk}= S^{(ijk)}N_{(ijk)}=0\,.
\end{equation}
Thus the subspaces $\bf{S},\bf{A},\bf{N}$ are mutually orthogonal one to another. 
On the second level, let us prove  that the tensors $K^{ijk}$ and $R^{ijk}$ are orthogonal. 
Since the tensor $R_{ijk}$ is traceless, 
\begin{equation}
    K^{ijk}R_{ijk}=\frac 15\left(\alpha^i g^{jk} +\alpha^j g^{ik}+ \alpha^k g^{ij}\right)R_{ijk}=0
\end{equation}
Similarly the tensors $M^{ijk}$ and $P^{ijk}$ are orthogonal one to another. 
Thus we conclude that all 5 independent parts of the tensor $T^{ijk}$ are mutually  orthogonal.
\end{proof}
The behaviour of the (non-unique) irreducible parts of $T^{ijk}$ is more complicated. Let us check the scalar product of the tensors $N_1^{ijk}$ and $N_2^{ijk}$. We use the operator representation of these  tensors (\ref{eq3}) and apply the fact that the simple cycle  permutations $(12), (13)$ and $(23)$ are self inverse. Due to (\ref{perm2}), 
\begin{eqnarray}
    N_1^{ijk}N_{2ijk}&=&
    (I+(12))(I-(13))T^{ijk}N_{2ijk}\nonumber\\
    &=&
    (I-(13))T^{ijk}(I+(12))N_{2ijk}\nonumber\\
    &=&T^{ijk}(I-(13))(I+(12))N_{2ijk}\nonumber\\
    &=&
    T^{ijk}(I-(13))(I+(12))(I+(13))(I-(12))T_{ijk}\ne 0\,.
\end{eqnarray}
The product of the operators in the middle is not zero, thus the tensors $N_1^{ijk}$ and $N_2^{ijk}$ are not orthogonal.

We can, however, identify two orthogonal proper subspaces of the space ${\bf{N}}$. Since $(I+(12))(I-(12))=0$, we have 
\begin{eqnarray}
    N_1^{ijk}{\hat N}_2^{ijk}&=&
    (I+(12))(I-(13))T^{ijk}N_2^{ijk}\nonumber\\
    &=&
    (I-(13))T^{ijk}(I+(12))N_2^{ijk}\nonumber\\
    &=&T^{ijk}(I-(13))(I+(12))N_2^{ijk}\nonumber\\
    &=&
    T^{ijk}(I-(13))(I+(12))(I-(12))(I+(23))T^{ijk}=0\,.
\end{eqnarray}
Notice, however, that we do not have here the direct sum decomposition, 
\begin{equation}
    {\bf N}\ne  {\bf N}_1\oplus {\bf{\hat N}}_2\,.
\end{equation}

\subsection{$SL(3,\mathbb R)$-decomposition} \subsubsection{Permutation tensor}
Consider now a 3-dimensional vector space $V$  endowed with the volume element instead of the scalar product. 
The volume element structure is uniquely determined by the permutation  pseudo-tensor $\epsilon_{ijk}$ of Levi-Civita. It is defined as 
\begin{equation}\label{ep-def}
    \epsilon_{ijk}=\left\{ \begin{array}{rl}
         1 & \mbox{if $(ijk)$ is an even permutation of (123)};\\
        -1 & \mbox{if $(ijk)$ is an odd permutation of (123)};\\
        0&\mbox{otherwise} 
        .\end{array} \right.
\end{equation}
Notice the contraction relations 
\begin{equation}\label{epsilon1}
    \epsilon_{ijk}\epsilon^{mnk}=\d^m_i\d^n_j-\d^m_j\d^n_i\,,
\end{equation}
and 
\begin{equation}\label{epsilon2}
    \epsilon_{ijk}\epsilon^{mjk}=2\d^m_i\,,\qquad \epsilon_{ijk}\epsilon^{ijk}=6\,
\end{equation} 
that are useful for manipulations with the tensor $\epsilon_{ijk}$. The upper indicied tensor $\epsilon^{ijk}$ is defined with the same numerical values as in (\ref{ep-def}). \footnote{This definition is in a correspondence with the Euclidean signature. For Lorentzian signature, different sign assumptions for  $\epsilon^{ijk}$ and $\epsilon_{ijk}$ must be involved.}

Under a linear transformation $e^i\to e^{i'}=R^{i'}_ie^i$ of a basis in $V$, 
  Levi-Civita's pseudo-tensor is transformed as
\begin{equation}
    \epsilon_{ijk}=({\rm det\,}R)\,R^{i'}_iR^{j'}_jR^{k'}_k
    \epsilon_{i'j'k'}\,,
\end{equation}
where ${\rm det\,}R$ denotes  the determinant of the matrix $R^{i'}_i$. 
Consequently, the numerical values of $\epsilon_{ijk}$ are preserved only under transformations with matrices of unit determinant, i.e.,     
${\rm det\,}R=1$. This condition  defines a subgroup of $GL(3,\mathbb R)$ -- the {\it special linear group} $SL(3,\mathbb R)$.

\subsubsection{Permutation tensor and symmetric and skew-symmetric parts}
Now we try to apply the tensor $\epsilon_{ijk}$ 
to a decomposition of a generic third-order tensor $T^{ijk}$. The full contraction of these two tensors produces a scalar 
\begin{equation}\label{A-def}
   A=\frac 16\epsilon_{ijk} T^{ijk}\,.
\end{equation}
Notice that $A$ is a pseudo-tensor under transformations from $GL(3,\mathbb R)$. Under transformations from $SL(3,\mathbb R)$, it is a proper tensor.

Due to the symmetries of the parts $S^{ijk}$ and $N^{ijk}$, their full contraction with $\epsilon_{ijk}$ vanish
\begin{equation}
   \epsilon_{ijk} S^{ijk}=0\,,\qquad \epsilon_{ijk} N^{ijk}=0\,.
\end{equation}
Consequently, only the totally antisymmetric part of $T^{ijk}$ contributes into the lhs of Eq.(\ref{A-def}), i.e.,
\begin{equation}
   A=\frac 16\epsilon_{ijk} A^{ijk}\,.
\end{equation}
Accordingly to (\ref{epsilon2}),  the inverse relation  is given by
\begin{equation}\label{Aep}
    A^{ijk}= A\epsilon^{ijk}\,.
\end{equation}
Thus the totally antisymmetric irreducible part $A^{ijk}$ of the tensor $T^{ijk}$ is completely  expressed by the scalar $A$.  

\subsubsection{Pseudo-tensors}
We define now  partial contractions of the tensors $T^{ijk}$ and $\epsilon_{ijk}$ with two indices summed. 
For a generic tensor $T^{ijk}$, there are three different possible contractions of this type
\begin{equation}\label{ABC}
    A_i{}^m=\epsilon_{ijk}T^{mjk}\,,\qquad 
    B_i{}^m=\epsilon_{ijk}T^{kmj}\,,\qquad 
    C_i{}^m=\epsilon_{ijk}T^{jkm}\,.
\end{equation}
Observe that these three tensors have the same trace
\begin{equation}
    A_i{}^i=B_i{}^i=C_i{}^i=6A\,.
\end{equation}
Let us substitute  the  $GL(3\mathbb R)$-decomposition of the tensor, $T^{ijk}=S^{ijk}+A^{ijk}+N^{ijk}$ into Eqs.(\ref{ABC}). Since a   contraction of a symmetric and antisymmetric sets of the same indices  is zero, the totally symmetric part $S^{ijk}$ does not contribute to the 2-indicied tensors (\ref{ABC}).  In order to derive the contributions of  two additional 3-rd order  tensors  $A^{ijk}$ and $N^{ijk}$, it is convenient to calculate their sum 
\begin{eqnarray}
    A_i{}^m+B_i{}^m+C_i{}^m&=&\epsilon_{ijk}\left(T^{mjk}+T^{kmj}+T^{jkm}\right)\nonumber\\&=&
    \frac 12\epsilon_{ijk}\left(T^{mjk}-T^{mkj}+T^{kmj}-T^{jmk}+T^{jkm}-T^{kjm}\right)\nonumber\\&=&
    3\epsilon_{ijk}T^{[mjk]}=3\epsilon_{ijk}A^{mjk}\,.
\end{eqnarray}
Due to (\ref{Aep}) and (\ref{epsilon2}), it means  
 \begin{equation}\label{ABC1x}
     A_i{}^m+B_i{}^m+C_i{}^m=3A\epsilon_{ijk}\epsilon^{mjk}=6A\d^m_i\,.
 \end{equation}
Define the traceless combinations 
\begin{equation}\label{ABC-check}
    \check{A}_i{}^m:=A_i{}^m-2A\d_i^m\,,\qquad 
    \check{B}_i{}^m:=B_i{}^m-2A\d_i^m\,,\qquad 
    \check{C}_i{}^m:=C_i{}^m-2A\d_i^m\,.
\end{equation}
These tensors are traceless, $\check{A}_i{}^i=\check{B}_i{}^i=\check{C}_i{}^i=0$, so every one of these tensors has 8 independent components. Eq. (\ref{ABC1x}) yields 
 \begin{equation}\label{ABC-check-sum}
    \check{A}_i{}^m+\check{B}_i{}^m+\check{C}_i{}^m=0\,.
 \end{equation}
 It means that only $16$ components are independent. 
In fact, only the mixed symmetry part $N^{ijk}$ of $T^{ijk}$  contributes to the tensors (\ref{ABC-check}). 
Indeed, the contribution of the symmetric tensor $S^{ijk}$ multiplied with the antisymmetric Levi-Civita's  tensor is identically zero, while   the tensor $A^{ijk}$ compensates the scalar parts in (\ref{ABC-check}). 
Consequently, (\ref{ABC-check}) can be rewritten via the  tensor $N^{ijk}$ only
\begin{equation}\label{ABC-check-N}
    \check{A}_i{}^m=\epsilon_{ijk}N^{mjk}\,,\qquad 
    \check{B}_i{}^m=\epsilon_{ijk}N^{kmj}\,,\qquad 
    \check{C}_i{}^m=\epsilon_{ijk}N^{jkm}\,.
\end{equation}

Due to (\ref{ABC-check-sum}) the tensors $\check{A}_i{}^m,\check{B}_i{}^m$ and $\check{C}_i{}^m$ are linearly dependent, so we have  two linearly independent traceless 2nd-order tensors of 8 independent components each.   Every two of the tensors given in  (\ref{ABC-check}),  or every two independent  linear combinations of them, can be chosen as  a representation of the irreducible part $N^{ijk}$ of  16 independent components.

\subsubsection{Pseudotensor representation of the mixed-symmetry part}
In order to demonstrate the full equivalence between the 3-rd order tensor $N^{ijk}$ and the 2-nd order tensors $A_i{}^j,B_i{}^j $, and $C_i{}^j$, we have to solve the linear system (\ref{ABC-check-N}).
It is a system of 24 linear equations in 16 independent variables $N^{ijk}$. Due to 8 constraints given in   (\ref{ABC-check-sum}), this system is well-posed. 
Substituting the decomposition $N^{ijk}=N_1^{ijk}+N_2^{ijk}$ and using the symmetries 
\begin{equation}\label{N12-sym}
    N_1^{ijk}=N_1^{jik}\,,\qquad N_2^{ijk}=N_2^{kji}
\end{equation}
 we derive from (\ref{ABC-check-N}) 
\begin{equation}\label{ABC-check-N2}
    \check{B}_i{}^m=\epsilon_{ijk}N_1^{kmj}\,,\qquad 
    \check{C}_i{}^m=\epsilon_{ijk}N_2^{jkm}\,.
\end{equation}
of 16 linear equations in 16 independent variables. 
Moreover, our system is decomposed into a pair of 8  independent systems in 8 independent variables. 
A general form of a solution of the system $\check{B}_i{}^m=\epsilon_{ijk}N_1^{kmj}$ can be written as 
\begin{equation}
    N_1^{kmj}= x
    \check{B}_p{}^k\epsilon^{pmj}+y
    \check{B}_p{}^m\epsilon^{pkj}+z
    \check{B}_p{}^j\epsilon^{pmk}\,
\end{equation}
with unknown numerical coefficients $x,y,z.$ 
Applying the symmetry of $N_1^{kmj}$ from  (\ref{N12-sym}), we obtain  $z=0\,,x=y $. 
Thus the most general solution of the first equation in (\ref{ABC-check-N2}) reads
\begin{equation}
    N_1^{kmj}= 
    x\left(\check{B}_p{}^k\epsilon^{pmj}+  \check{B}_p{}^m\epsilon^{pkj}\right)\,.
\end{equation}
Contracting both sides of this equation with  $\epsilon_{ijk}$, we derive
\begin{eqnarray}
    \check{B}_i{}^m=x\left(\check{B}_p{}^k\epsilon_{ijk}\epsilon^{pmj}+  \check{B}_p{}^m\epsilon_{ijk}\epsilon^{pkj}\right)\,.
\end{eqnarray}
Due to the rules (\ref{epsilon1},\ref{epsilon2}) we have here 
\begin{eqnarray}
    \check{B}_i{}^m=x\left(\check{B}_p{}^k(-\d^p_i\d^m_k+\d^m_i\d^p_k)-\check{B}_p{}^m\d^p_i\right)=-2x \check{B}_i{}^m \,.
\end{eqnarray}
Consequently, $x=-(1/2)$ and 
\begin{equation}
    N_1^{kmj}= -\frac 12  \left(\check{B}_p{}^k\epsilon^{pmj}+ \check{B}_p{}^m\epsilon^{pkj}\right)\,.
\end{equation}
In the same fashion, we derive the solution of the second system in (\ref{ABC-check-N2})
\begin{equation}
    N_2^{kmj}= -\frac 12  \left(\check{C}_p{}^k\epsilon^{pmj}+  \check{C}_p{}^j\epsilon^{pmk}\right)\,.
\end{equation}
Finally,
\begin{equation}\label{N-BC}
N^{kmj}= -\frac 12 \left((\check{B}_p{}^k\epsilon^{pmj}+  \check{B}_p{}^m\epsilon^{pkj})+
(\check{C}_p{}^k\epsilon^{pmj}+  \check{C}_p{}^j\epsilon^{pmk})
\right)\,.
\end{equation}
As a result, the $SL(3,\mathbb R)$-structure on the basis vector space $V$ provides an additional specification of the 3-rd order tensor subspaces. In particular:
\begin{itemize}
    \item The totally symmetric subspace $\bf S$  is not sensitive to the additional $SL(3,\mathbb R)$-structure.
    \item The totally skew-symmetric subspace $\bf A$ is characterized by a unique parameter $A$.
    It is a pseudo-scalar under general linear transformations and a proper scalar under transformations from the group  $SL(3,\mathbb R)$.
    \item The mixed-symmetry subspace $\bf N$ of dimension 16 is characterized by two independent traceless matrices of 8 components each. These matrices transform as pseudo-tensors under  general linear transformations and as proper tensors under transformations from the group  $SL(3,\mathbb R)$. 
    The choice of these matrices is not unique. This fact is in a correspondence to the non-uniqueness  of the decomposition $N^{ijk}=N_1^{ijk}+N_2^{ijk}$ that was derived at the level of $GL(3,\mathbb R)$-structure.
\end{itemize}

\subsection{$SO(3,\mathbb R)$-decomposition}
The $SO(3,\mathbb R)$-structure is defined by proper orthogonal transformations that belong to the intersection of the special linear group and the orthogonal group, i.e., 
\begin{equation}
    SO(3,\mathbb R)=SL(3,\mathbb R)\cap O(3,\mathbb R).
\end{equation}
In this case, the metric tensor $g_{ij}$ and the permutation tensor $\epsilon_{ijk}$ can be used simultaneously for  decomposition of a 3-rd order tensor $T^{ijk}$. Recall that the  $SL(3,\mathbb R)$ group itself does not define an additional decomposition of the tensor space. Consequently, the $SO(3,\mathbb R)$ structure can provide only an alternative description of the subspaces that already derived by the $O(3,\mathbb R)$-decomposition.

The tensor $\epsilon_{ijk}$ is not relevant to the fully  symmetric subspace  ${\bf S}$. Consequently, it's decomposition remains the same  as it is given in  Eq.(\ref{S-decomp0}). 
Briefly this representation can be written as a pair of a vector and a fully traceless tensor
\begin{equation}
    S^{ijk}=\{\a^i, R^{ijk}\},
\end{equation}
with the dimension reduction $10=3+7$. 

The skew-symmetric subspace ${\bf A}$ of dimension 1 is represented in $SO(3,\mathbb R)$-framework by a pseudo-scalar $A$, namely $A^{ijk}=A\epsilon^{ijk}$. 

The mixed-symmetric tensor $N^{ijk}$ is represented in the $SL(3,\mathbb R)$-framework by two mixed-type tensors, say $\check{B}_i{}^j$ and $\check{C}_i{}^j$. These tensors are traceless, $\check{B}_i{}^i=\check{C}_i{}^i=0$, thus their  additional decomposition is impossible. 

When the $O(3,\mathbb R)$ group acts in addition to $SL(3,\mathbb R)$, the pseudo-tensors $\check{B}_i{}^j$ and $\check{C}_i{}^j$ can be inverted into covariant (or contravariant) 2-nd order  pseudo-tensors. It means that one can define 
\begin{equation}
    \check{B}_{ij}:=g_{jm}\check{B}_i{}^m\quad   {\rm and}\quad   \check{C}_{ij}:=g_{jm}\check{C}_i{}^m\, 
\end{equation}
that are completely equivalent to $\check{B}_i{}^j$ and $\check{C}_i{}^j$.
Now, the pseudo-tensors $\check{B}_{ij}$ and $\check{C}_{ij}$ can be irreducible decomposed into the sum of their symmetric and skew-symmetric parts 
\begin{equation}
     \check{B}_{ij}=\check{B}_{(ij)}+\check{B}_{[ij]}\quad   {\rm and}\quad   \check{C}_{ij}=\check{C}_{(ij)}+\check{C}_{[ij]}\,.
\end{equation}
Since the tensors are traceless, the dimensions of the corresponded parts are $8=5+3$, respectively. 
We denote the symmetric parts as 
\begin{equation}
    E_{ij}=\check{B}_{(ij)}\quad   {\rm and}\quad   F_{ij}=\check{C}_{(ij)}\,.
\end{equation}
As about the skew-symmetric parts of pseudo-tensors, they can be   expressed   as  vectors. Indeed, we can define 
\begin{equation}
    \epsilon^{ijk}\check{B}_{[ij]}\quad   {\rm and}\quad  \epsilon^{ijk}\check{C}_{[ij]}\,.
\end{equation}
Notice, that being defined as contractions of two pseudo-tensors, these two vectors are proper. They can be expressed as linear combinations of the vectors $\b^k$ and $\gamma^k$ defined above. 
Consequently, the  mixed-type parts are decomposed irreducible as 
\begin{equation}
    N_1^{ijk}=\{E_{ij},\b^i\} \quad   {\rm and}\quad N_2^{ijk}=\{F_{ij},\gamma^i\}\,.
\end{equation}
We will provide   the explicit expressions for these relations in the sequel. 
\subsection{Results:}
\begin{figure}
\begin{center}
\includegraphics[width=0.95\linewidth]{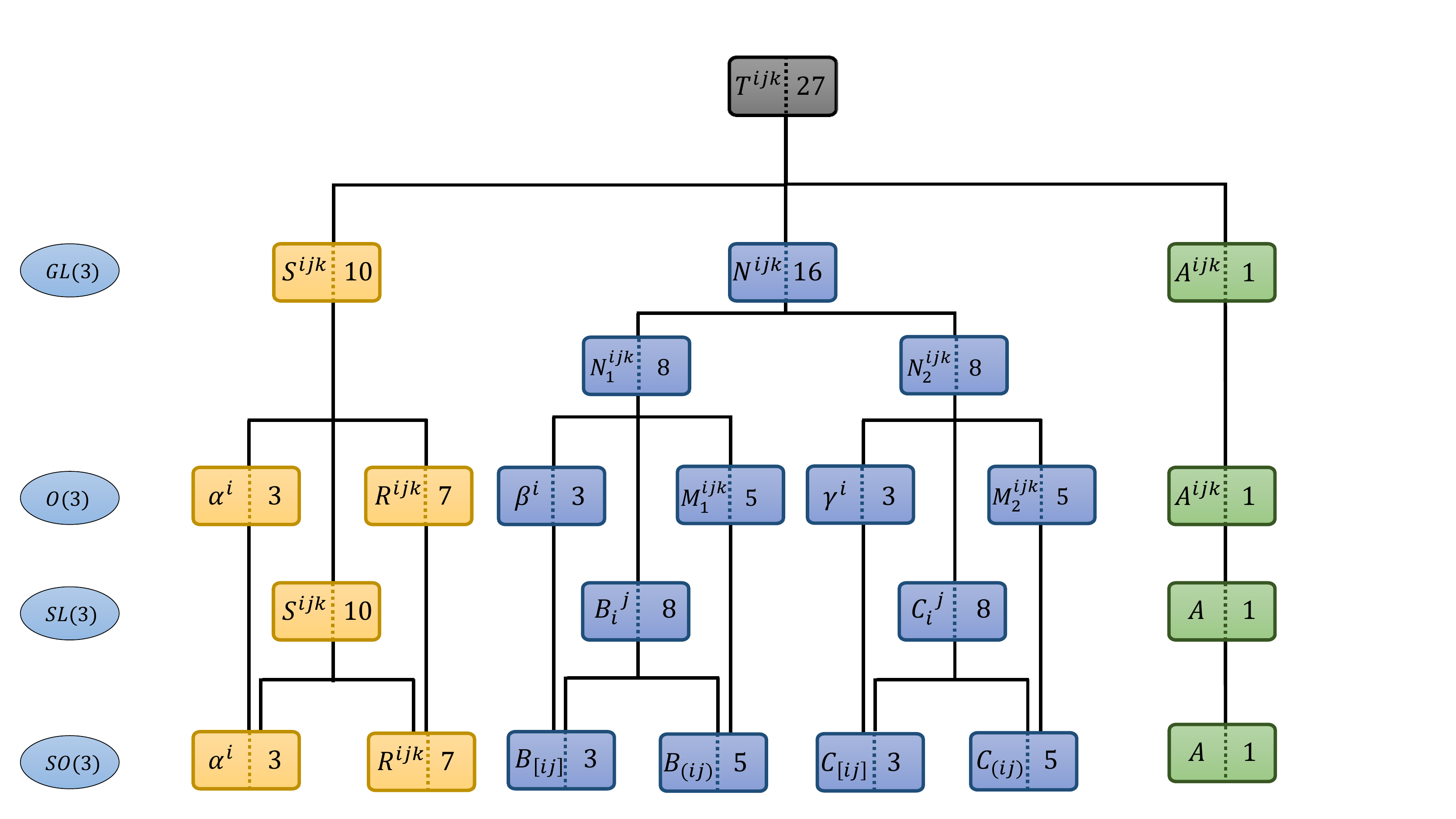}
\caption{Schematic representation of different types of decomposition of a generic 3-rd order  tensor. }
\label{figure:curve2}
\end{center}
\end{figure}
In this section we derived different types of decomposition of a generic 3-rd order tensor, see Fig.~1 for a  schematic representation. It is due to different structures on the vector space $V$ and correspondingly to different groups of transformations on $V$.

Our results are as follows: 
\begin{itemize}
    \item On the $GL(3,\mathbb R)$-level, the decomposition is presented uniquely  by three 3-rd order tensors. A successive decomposition into four 3-rd order tensors is irreducible but non-unique. Some explicit examples of equivalent irreducible decompositions are presented above.   
    \item  On the $O(3,\mathbb R)$-level with the metric tensor at hand, we can extract three vectors and 3 traceless 3-rd order tensors. These quantities provide the direct sum decomposition of the tensor space into 7 subspaces. 
    \item The $SL(3,\mathbb R)$-structure with a permutation tensor $\epsilon_{ijk}$ defined allows us to express the totally skew-symmetric tensor by a pseudo-scalar. Moreover, we can define mixed 2-nd order pseudo-tensors as an alternative representation of the  tensor $N^{ijk}$.
    \item The  $SO(3,\mathbb R)$-structure allows us to join the previous cases and yields the most finer decomposition. In particular,  it adds a representation of the tensor $N^{ijk}$ by three symmetric traceless 2-nd order tensors. Notably, the latter tensors have pseudo-tensorial nature.
\end{itemize}

\section{Tensors with partial symmetries and their decomposition}

In solid-state physics,  the 3-rd order tensors emerge as constitutive tensors. They establish linear phenomenological relations between the primary physical variables, \cite{Tinder}. Due to the fundamental symmetries of these variables,  constitutive tensors turn out to be partially symmetric. 

\subsection{Tensors with a symmetry  in a pair of indices: Piezoelectric tensor}
\subsubsection{Definition} 
Most phenomenological models used in solid-state physics are  dealing with the constitutive tensor that is symmetric in a pair of its indices. The well-known example is the piezoelectric tensor, see e.g., \cite{Haus}.   
This  tensor, $D^{ijk}$, is defined as a set of coefficients that relate the induced electric displacement vector $P^i$ to the second-order elasticity stress tensor $\sigma_{jk}$,
 \begin{equation}
     P^i=D^{ijk}\sigma_{jk}\,.
 \end{equation}
 Due to the symmetry of the stress tensor $\sigma_{ij}=\sigma_{ji}$, the piezoelectric tensor satisfies the symmetry relation
 \begin{equation}\label{p-sym}
     D^{ijk}=D^{ikj}\,.
 \end{equation}
In general, such a pair-symmetric tensor has 18 independent components. The invariant decomposition of the piezoelectric tensor can provide a useful information about the piezoelectric phenomena as well as about the proper classification of the piezoelectric crystals. Moreover, it can serve as a  useful theoretical tool for design novel  piezoelectric materials with specific  properties. Recently the algebraic properties of the pair-symmetric third-order tensor, especially in the case of the piezoelectric tensor, was studied intensively, see \cite{Qi}, \cite{} @@@@@@. Let us look at how the decomposition of the piezoelectric tensor $D^{ijk}$ can follow from the decomposition of the general tensor $T^{ijk}$ described above. 

\subsubsection{ $GL(3,R)$-decomposition} 
We start with the permutation decomposition of the piezoelectric tensor. Due to the symmetry (\ref{p-sym}), $D^{ijk}$ does not contain a totally skew-symmetric part $A^{ijk}$. Moreover, as the identity  $T^{(ijk)}=T^{(i(jk))}$ holds for an arbitrary tensor, the subspace  corresponding to the totally symmetric part of $D^{ijk}$ lies into  the tensor space defined by the tensor $D^{ijk}$. Consequently, the decomposition 
\begin{equation}\label{D-dec}
    D^{ijk}=S^{ijk}+N^{ijk}\,,\qquad {\rm where}\qquad S^{ijk}=D^{(ijk)}\,
\end{equation}
provides a unique decomposition of the space ${\bf D}$  into the direct sum of two subspaces with the indicated dimensions 
\begin{equation}\label{D-dec-1}
    {\mathbf {D}}={\bf S}\oplus{\bf N}\,,\qquad 18=10+8\,.
\end{equation}
Here the fully symmetric part  is reduced to 
\begin{equation}\label{S-piezo}
    S^{ijk}=D^{(ijk)}=
    \frac{1}{3}\left(D^{ijk}+D^{jki}+D^{kij}\right)\,,
\end{equation}
while the residue part reads
\begin{equation}\label{N-piezo}
    N^{ijk}=\frac{1}{3}\left(2D^{ijk}-D^{kij}-D^{jki}\right)\,.
\end{equation}
Notice that now $N^{ijk}=N^{ikj}$, i.e., this sub-tensor inhabits the symmetry of the piezoelectric tensor. In terms of tensor spaces, it means that space ${\bf N}$ corresponding to $N^{ijk}$ is a proper subspace of the tensor space ${\bf D}$, as it is indicated in Eq.(\ref{D-dec}). 

In contrast to the general case, the successive decomposition of the tensor  $N^{ijk}$ is impossible. Indeed, for a  general 3-rd order tensor $T^{ijk}$, the space  ${\bf N}$ is of dimension 16 and it is decomposed (non-uniquely) into the direct sum of two subspaces  ${\bf N_1, N_2}$ both of dimension 8. 
 However, in our current restricted case, the whole space ${\bf N}$  is only of dimension 8. So it cannot contain two disjoint proper subspaces of dimension 8. Let us check what is going with the subspaces of $N_{ijk}$ in this restricted case.  
First we observe that the Young tableau generated  subspaces $N_1^{ijk}$ and $N_2^{ijk}$ remain with the dimensions 8 also in our pair-symmetric case. But they do not have the fundamental symmetries (\ref{p-sym}), so their tensor spaces  are not subspaces of ${\bf N}$. In other words, even the equation $N^{ijk}=N_1^{ijk}+N_2^{ijk}$ holds,  ${\bf N}\ne{\bf N_1}\oplus{\bf N_2}$. The description is simplified  when we project the equation above onto the subsbace ${\bf D}$. It means that we  rearrange this equation into the form   $N^{i(jk)}=N_1^{i(jk)}+N_2^{i(jk)}$. For a pair-symmetric tensor,  the equality  $N_1^{i(jk)}=N_2^{i(jk)}$ holds, so we indeed do not have here any additional  decomposition. 

An alternative decomposition $N^{ijk}=\tilde{N}_1^{ijk}+\tilde{N}_2^{ijk}$ is even more suitable in our case. Indeed, now both sub-tensors $\tilde{N}_1^{ijk}$ and $\tilde{N}_2^{ijk}$ do have the fundamental symmetries  (\ref{p-sym}) and produce corresponded subspaces. However in the pair-symmetric case, $\tilde{N}_2^{ijk}=0$ and $N^{ijk}=\tilde{N}_1^{ijk}$. Once more we do not have any invariant decomposition of the tensor $N^{ijk}$. 

These facts demonstrate that   the decomposition (\ref{D-dec}) is the unique irreducible $GL(3,\mathbb R)$-invariant decomposition of the piezoelectric tensor.

\subsubsection{ $O(3,\mathbb R)$-decomposition}  Let us turn now to the decomposition of $D^{ijk}$ under the orthogonal group $O(3,\mathbb R)$. With  a metric tensor  $g_{ij}$ at hand we can construct from  a  third-order tensor $D^{ijk}=D^{i(jk)}$ only two trace-type vectors $v^k,w^k$ such that:
\begin{equation}\label{eq9}
v^k=g_{ij}D^{ijk}=g_{ij}D^{ikj}\qquad {\rm and}\qquad 
w^k=g_{ij}D^{kij}.
\end{equation}
How these vectors contribute to the successive decomposition of the piezoelectric tensor? 

Due to the fully symmetry of  $S^{ijk}$, we have 
\begin{equation}
    g_{ij}S^{ijk}=g_{ik}S^{ijk}=g_{jk}S^{ijk}.
\end{equation}
 Thus  only one trace vector can be constructed from $S^{ijk}$. We denote it as 
\begin{equation}
   \alpha^k:= g_{ij}S^{ijk}=\frac{1}{3}g_{ij}\left(D^{ijk}+D^{jki}+D^{kij}\right)=\frac{1}{3}(2v^k+w^k)
\end{equation}
Consequently, the tensor $S^{ijk}$ is decomposed into the sum of two independent symmetric parts: 
\begin{equation}\label{S-decomp}
    S^{ijk}=K^{ijk}+R^{ijk}
\end{equation} 
  where the vector part is given in (\ref{S-decomp1}) by 
  \begin{equation}
      K^{ijk}=\frac{1}{5}\left(\alpha^i g^{jk} +\alpha^j g^{ik}+ \alpha^k g^{ij}\right),
  \end{equation}
 while  the residue  part  $R^{ijk}$ satisfies the traceless relations
\begin{equation}
  g_{ij}R^{ijk}=0,  \quad g_{ik}R^{ijk}=0,   \quad  g_{jk}R^{ijk}=0. 
\end{equation}
Hence, the tensor space ${\bf S}$ is decomposed into the direct sum of two subspeces with the  respective dimensions
 \begin{equation}
    {\bf S}={\bf K}\oplus {\bf R}\,, \qquad 10=3\oplus 7.
\end{equation}

Now we consider the trace decomposition  of the tensor $N^{ijk}$. Two possible vectors 
\begin{equation}
    g_{ij}N^{ijk}=\frac{1}{3}(v^k-w^k),\qquad g_{jk}N^{ijk}=\frac{2}{3}(w^i-v^i)\,
\end{equation}
turn out to be linearly dependent.
Consequently we have   here only one independent vector
\begin{equation}
    \beta^j=\frac{2}{3}(v^j-w^j)
\end{equation}
Thus the tensor $N^{ijk}$ can be decomposed into the sum:
\begin{equation}
    N^{ijk}=M^{ijk}+P^{ijk}\,,
\end{equation}
 where the vector part
 \begin{equation}\label{symM}
   M^{ijk}=  \frac{1}{4}\left(2g^{jk}\beta^i-g^{ik}\beta^j-g^{ij}\beta^k\right)\,,
 \end{equation}
while the residue part $P^{ijk}=N^{ijk}-M^{ijk}$ satisfies the traceless relation 
\begin{equation}
     g_{ij}P^{ijk}=0,  \quad g_{ik}P^{ijk}=0,   \quad  g_{jk}P^{ijk}=0\,.
\end{equation}
Accordingly, we have a  decomposition of the  subspace ${\bf N}$ into the direct sum of two invariant subspaces with the corresponding  dimensions 
\begin{equation}
    {\bf N}={\bf M}\oplus {\bf P}\,, \qquad 8=3\oplus 5\,.
\end{equation}

 \subsubsection{ $SL(3,\mathbb R)$-decomposition} 
 In the $SL(3,\mathbb R)$ background, the permutation pseudotensor $\epsilon_{ijk}$ is available. Since the fully antisymmetric part of the piezoelectric tensor vanishes, the pseudo-scalar $A=0$. 
 Moreover, the contraction in two indices of the symmetric part $S^{ijk}=D^{(ijk)}$ with $\epsilon_{ijk}$ is trivial as in the general case. Consequently,  $\epsilon_{ijk}$ acts only on the mixed-type tensor $N^{ijk}$. 
  For the 2-nd order tensors defined in Eq.(\ref{ABC}), we have 
 \begin{equation}\label{ABC1}
    A_i{}^m=\epsilon_{ijk}D^{mjk}=0\,,
    \qquad 
    B_i{}^m=\epsilon_{ijk}D^{kmj}=\epsilon_{ijk}D^{jkm}=\ 
    -C_i{}^m\,.
\end{equation}
Since these tensors are traceless and satisfy the relations  $A_i{}^m+B_i{}^m+C_i{}^m=0$, there is only one nontrivial contraction, say $B_i{}^m$. 
In order to derive the contribution of this tensor to   $N^{ijk}$, we start with the  linear relations between these two tensors of 8 independent components
\begin{equation}
   B_i{}^m=\epsilon_{ijk}D^{kmj}=\epsilon_{ijk}{N}^{kmj}\,.
\end{equation}
To inverse this relation, we substitute    $C_i{}^m=-B_i{}^m$ into Eq.(\ref{N-BC}). Consequently,  we have
\begin{equation}\label{N-B}
    N^{kmj}=\frac 12  \left( {B}_p{}^m\epsilon^{kpj}+
 {B}_p{}^j\epsilon^{kpm}\right)\,.
\end{equation}
Observe that this expression has a desired  symmetry $N^{kmj}=N^{kjm}$.
As a result, the representations of the mixed-symmetry part of the piezoelectric tensor by the 3-rd order tensor $N^{ijk}$ and by the 2-nd order pseudo-tensor ${B}_i{}^j$ are completely equivalent. We are left with the decomposition of the piezoelectric tensor space into two subspaces with the dimensions $18=10+8$. 
 
 \subsubsection{ $SO(3,\mathbb R)$-decomposition} 
 On the $SO(3,\mathbb R)$-level, the mixed pseudo-tensor $B_i{}^j$ can be reverted into the contravariant tensor $B_{ij}=g_{jm}B_i{}^m$. This new tensor can be decomposed into the skew-symmetric and symmetric parts with the dimensional reduction $8=3+5$. Let us show that this is exactly the same decomposition that we already have on the $O(3,\mathbb R)$-level. In particular,  we can show that the skew-symmetric part $B_{[ij]}\sim \b^i$, while the symmetric part   $B_{(ij)}\sim P^{ijk}$.

 Substituting into Eq.(\ref{N-B}) the expression ${B}_p{}^m=g^{mr}B_{pr}$ we obtain \begin{equation}\label{N-B2}
    N^{kmj}=\frac 12  \left( g^{mr}\epsilon^{kpj}+g^{jr}
 \epsilon^{kpm}\right)B_{pr}=\frac 12  \left( g^{mr}\epsilon^{kpj}+g^{jr}
 \epsilon^{kpm}\right)\left({B}_{[pr]}+{B}_{(pr)}\right)\,.
\end{equation}
In order to show that the first term gives the tensor $M^{ijk}$, we chose the representation, 
\begin{equation}
    {B}_{[pr]}=\frac 12\epsilon_{prs}\b^s\,.
\end{equation}
Then the first term in (\ref{N-B2}) is expressed as 
\begin{eqnarray}
    &&\frac 12  \left( g^{mr}\epsilon^{kpj}+g^{jr}
 \epsilon^{kpm}\right){B}_{[pr]}=\frac 14 \left( g^{mr}\epsilon^{kpj}+g^{jr}
 \epsilon^{kpm}\right)\epsilon_{prs}\b^s=\nonumber\\
 && -\frac 14\left(g^{mr}(\d^k_r\d^j_s-\d^k_s\d^j_r)+g^{jr}(\d^k_r\d^m_s-\d^k_s\d^m_r)\right)\b^s=\nonumber\\
 &&\frac 14\left(2g^{jm}\b^k-g^{mk}\b^j-
 g^{jk}\b^m\right)=M^{ijk}\,.
\end{eqnarray}
Hence we identified the vector part of 3 independent components (\ref{symM}) with the first term of  (\ref{N-B2}). The residue part is traceless and expressed by 5  independent components ${B}_{(ij)}$. So it is equal to $P^{ijk}=N^{ijk}-M^{ijk}$. 

In the decomposition $N^{ijk}=M^{ijk}+P^{ijk}$, we derived the representations of the irreducible parts $M^{ijk}$ and $P^{ijk}$ by two independent matrices: The  antisymmetric matrix $B_{[ij]}$ and the symmetric traceless matrix $B_{(ij)}$. Explicitly, 
\begin{eqnarray}
    M^{ijk}&=&\frac 12  \left( g^{mr}\epsilon^{kpj}+g^{jr}
 \epsilon^{kpm}\right){B}_{[pr]}\,,\\
 P^{ijk}&=&\frac 12  \left( g^{mr}\epsilon^{kpj}+g^{jr}
 \epsilon^{kpm}\right){B}_{(pr)}\,.
\end{eqnarray}

 \subsubsection{Results}
 Let us summarize our results on the irreducible decomposition of the pieszoelectric tensor (and tensors with similar pair symmetry). For a schematic   representation, see Fig.2. Notice that our approach allows to distinguish the vectors $\a^i$ and $\b^i$. These two vectors are merely coming from two different irreducible parts ${\bf S}$ and ${\bf N}$.
  \begin{figure}
\begin{center}
\includegraphics[width=0.8\linewidth]{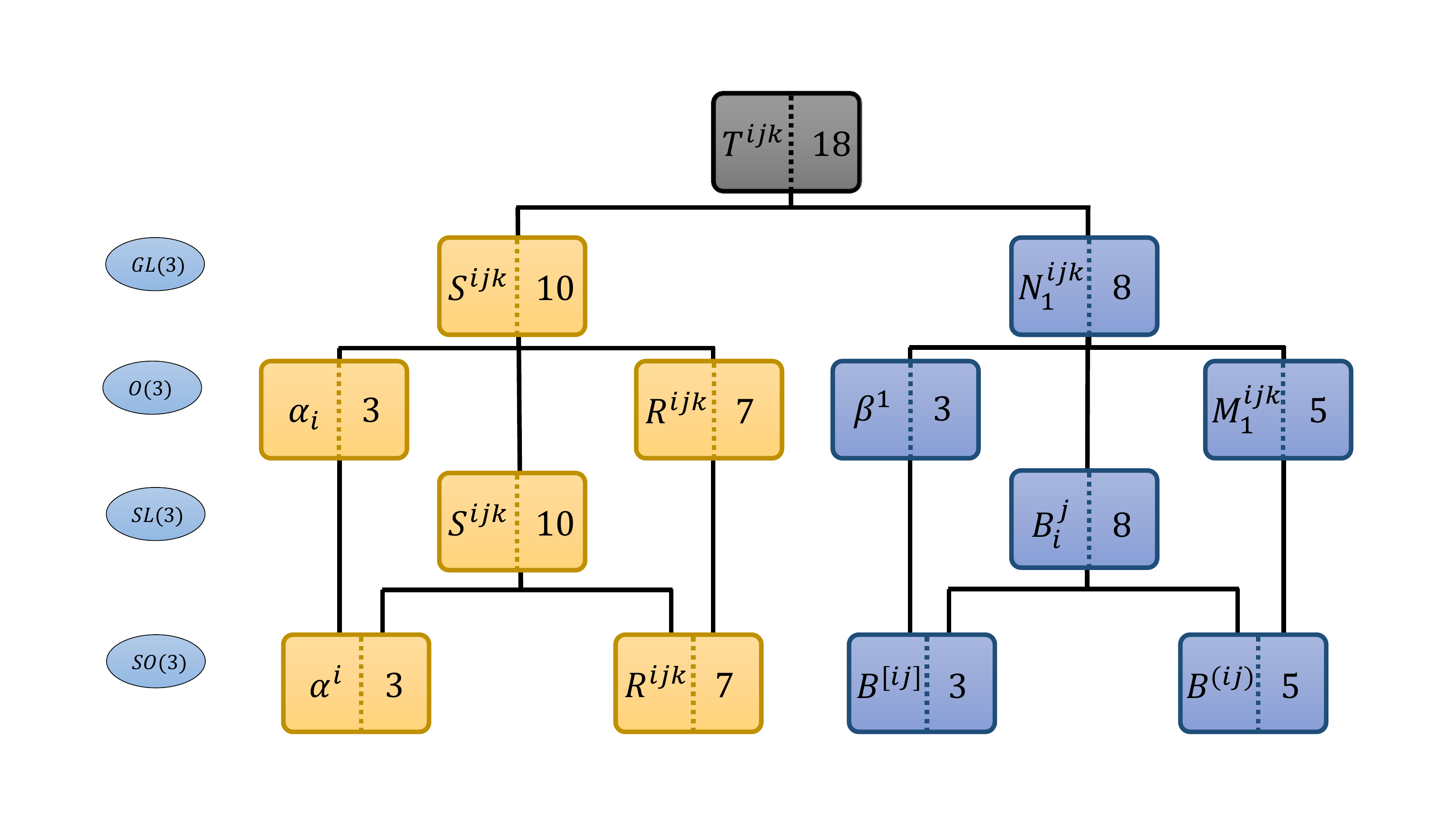}
\caption{Schematic representation of different levels of decomposition of a  3-rd order  piezoelectric tensor. }
\label{figure:curve2a}
\end{center}
\end{figure}

 \begin{prop}
 Let a piezoelectric-type tensor $D^{ijk}$ with the symmetry $D^{ijk}=D^{ikj}$ be given.
 \begin{itemize}
     \item On the $GL(3,\mathbb R)$-level, the tensor is decomposed uniquely and irreducibly into the sum of two subtensors (\ref{S-piezo}) and (\ref{N-piezo}). The tensor space is decomposed into the direct sum of two subspaces: 
     \begin{equation}
         {\bf D}={\bf S}\oplus{\bf N}\,,\qquad 18=10+8\,.
     \end{equation}
     \item On the $O(3,\mathbb R)$-level, the fully symmetric part $S^{ijk}$ and the mixed part $N^{ijk}$ are  decomposed both into the vector (trace)  and the traceless parts. In term of subspaces, 
     \begin{equation}
          {\bf S}={\bf K}\oplus{\bf R}\,,\qquad 10=3+7\,.
     \end{equation}
     \begin{equation}
          {\bf N}={\bf M}\oplus{\bf P}\,,\qquad 8=3+5\,.
     \end{equation}
     Consequently,
     \begin{equation}
         {\bf D}=({\bf K}\oplus{\bf R})\oplus({\bf M}\oplus{\bf P})\,,\qquad 18=(3+7)+(3+5)\,.
     \end{equation}
     \item The subspaces ${\bf K},{\bf R},{\bf M},$ and ${\bf P}$ are mutually orthogonal one to another. 
     \item On the $SL(3,\mathbb R)$-level, the tensor $N^{ijk}$ is expressed as a pseudo-tensor $B^i{}_j$.
      \item On the $SO(3,\mathbb R)$-level, the fully symmetric tensor $S^{ijk}$ returns to the  $O(3,\mathbb R)$-decomposition that is given by the vector and traceless parts. 
      The pseudo-tensor $B^i{}_j$ is decomposed irreducibly and uniquely into the sum of two pseudo-tensors $B_{[ij]}$ and $B_{(ij)}$ that span the spaces ${\bf M}$ and ${\bf P}$, respectively. 
 \end{itemize}
 \end{prop}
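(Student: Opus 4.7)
The plan is to assemble the claim as a direct consolidation of the step-by-step analysis already carried out for $D^{ijk}$ in the preceding subsections. I would organize the proof as four short verifications, one per bullet, each invoking the construction already in place and checking that the pair symmetry $D^{ijk}=D^{ikj}$ does not destroy the features claimed (existence, uniqueness, irreducibility, orthogonality).

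First, for the $GL(3,\mathbb{R})$-level bullet, I would start by observing that $D^{ijk}=D^{ikj}$ immediately forces $D^{[ijk]}=0$, so the general trichotomy $T^{ijk}=S^{ijk}+A^{ijk}+N^{ijk}$ collapses to $D^{ijk}=S^{ijk}+N^{ijk}$ with $S^{ijk}$ and $N^{ijk}$ as in (\ref{S-piezo}) and (\ref{N-piezo}). Uniqueness of this splitting follows from the general uniqueness of $S\oplus A\oplus N$ applied to the ambient $T^{ijk}$. The key point requiring care is \emph{irreducibility} of $N^{ijk}$: the generic 8+8 splitting ${\bf N}={\bf N}_1\oplus{\bf N}_2$ is no longer available here because $\dim{\bf N}=8$ in the restricted setting; I would rerun the argument already given after (\ref{N-piezo}), namely that either $N_1^{i(jk)}=N_2^{i(jk)}$ (so nothing new is obtained by projecting onto ${\bf D}$) or that in the alternative tableau $\tilde{N}_2^{ijk}=0$, exhibiting the absence of any proper invariant subspace. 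Dimension counting $18=10+8$ then closes this bullet.

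For the $O(3,\mathbb{R})$-level bullet I would invoke the trace constructions already performed: the single independent trace vector $\alpha^k=g_{ij}S^{ijk}$ yields the split $S^{ijk}=K^{ijk}+R^{ijk}$ with $R$ totally traceless, giving $10=3+7$, and the single independent trace vector $\beta^k=(2/3)(v^k-w^k)$ yields $N^{ijk}=M^{ijk}+P^{ijk}$ with $M^{ijk}$ as in (\ref{symM}) and $P^{ijk}$ traceless, giving $8=3+5$. For the orthogonality bullet, I would appeal directly to Proposition 3 of the general case, whose proof used only (\ref{sym-pr3a}), (\ref{sym-pr3b}) together with the tracelessness of $R^{ijk}$ and $P^{ijk}$; these inputs survive verbatim under the pair-symmetry restriction, so the pairwise orthogonality of ${\bf K},{\bf R},{\bf M},{\bf P}$ is inherited without further work.

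For the $SL(3,\mathbb{R})$ bullet I would substitute $D^{ijk}=D^{ikj}$ into (\ref{ABC}) to get $A_i{}^m=\epsilon_{ijk}D^{mjk}=0$ and $B_i{}^m=-C_i{}^m$, leaving a single traceless pseudo-tensor $B_i{}^m=\epsilon_{ijk}D^{kmj}$ of $8$ independent components; the inversion formula (\ref{N-B}) (obtained by specializing (\ref{N-BC}) via $C=-B$) checks, through direct index manipulation, that $N^{kmj}=N^{kjm}$, confirming full equivalence with $N^{ijk}$. For the final $SO(3,\mathbb{R})$ bullet I would lower an index to form $B_{ij}=g_{jm}B_i{}^m$, split it into $B_{[ij]}+B_{(ij)}$ of dimensions $3+5$, and then carry out the direct computation already displayed after (\ref{N-B2}), using ${B}_{[pr]}=\tfrac12\epsilon_{prs}\beta^s$ and the $\epsilon$-$\epsilon$ contraction identities (\ref{epsilon1}), (\ref{epsilon2}), to identify the antisymmetric part with $M^{ijk}$ and the symmetric traceless part with $P^{ijk}$. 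The main obstacle, and the only nontrivial point worth spelling out in detail, is the irreducibility claim inside the first bullet, since the non-unique $N_1/N_2$ and $\tilde{N}_1/\tilde{N}_2$ splittings from the generic case must be shown to degenerate for pair-symmetric tensors; once this is done, every other bullet reduces to verifying symmetries, counting dimensions, and quoting the already established orthogonality and inversion identities.
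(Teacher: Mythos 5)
Your proposal is correct and follows essentially the same route as the paper: the proposition is proved there precisely by consolidating the preceding subsections, with the pair symmetry killing $A^{ijk}$, the degeneration $N_1^{i(jk)}=N_2^{i(jk)}$ and $\tilde N_2^{ijk}=0$ establishing irreducibility of ${\bf N}$, the trace vectors $\alpha^k,\beta^k$ giving the $O(3,\mathbb R)$ refinement, orthogonality inherited from the general proposition, and the single pseudo-tensor $B_i{}^m$ with its $B_{[ij]}+B_{(ij)}$ split handling the $SL$ and $SO$ levels. You also correctly single out the irreducibility of ${\bf N}$ as the only step needing a genuine argument beyond symmetry checks and dimension counting.
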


 \subsection{Tensors with skew-symmetry in a pair of indices: Hall tensor}
 \subsubsection{Definition}
 An alternative  constraint of the general third-order constitutive tensor emerges in the description of the Hall effect and  the Faraday effect. In this case, the constitutive tensor is skew-symmetric in a pair of its indices. In particular, we consider here the Hall effect that is widely  observed in conductors and semi-conductors. In  anisotropic media, it is described  as follows,see, e.g.,\cite{Haus} ,\cite{Liu}
  \begin{equation}
     E_i=\kappa_{ijk}I^jH^k\,,
 \end{equation}
 where $I^j$ is an electric current density, $H^k$ is a magnetic field, while $E_i$ is an induced electric field.  For physical details, see  e.g. \cite{Haus}. 
 The set of the coefficients  $\kappa_{ijk}$ describes physical properties of the conductor. These coefficients  form a contravariant third-order tensor  referred to as the {\it Hall tenor.}
 Due to statistical arguments, the  Onsager relations   yield the basic symmetry of the Hall tensor \cite{LL8}  
 \begin{equation}
  \label{a-sym}   \kappa_{ijk}=-\kappa_{jik}\,.
 \end{equation}
In general, the skew-symmetric tensor  $\kappa_{ijk}$ has 9 independent components. The decomposition of this tensor was studied recently in various publications, see, e.g., \cite{Liu} and the references given therein.  Let us look how this decomposition follows from  the decomposition of the general third-order tensor (in the contravariant version).

 \subsubsection{ $GL(3,R)$-decomposition}
 
We start with the permutation decomposition of the Hall  tensor. Due to the skew-symmetry (\ref{a-sym}), this tensor does not contain the fully symmetric part $S_{ijk}$. Thus it is decomposed uniquely into a  sum of two independent parts: the fully skew-symmetric part $A_{ijk}$, and the mixed-symmetry part $N_{ijk}$.  This decomposition with the indicated dimensions reads
\begin{equation}\label{P-dec}
    \kappa_{ijk}=A_{ijk}+N_{ijk}\,,\qquad 9=1+8\,.
\end{equation}
Due to (\ref{a-sym}), the skew-symmetric part is reduced to 
\begin{equation}
    A_{ijk}=\kappa_{[ijk]}=
    \frac{1}{3}\left(\kappa_{ijk}+\kappa_{jki}+\kappa_{kij}\right)\,,
\end{equation}
while the residue mixed-symmetry part reads
\begin{equation}
    N_{ijk}=\kappa_{ijk}-\kappa_{[ijk]}=\frac{1}{3}\left(2\kappa_{ijk}-\kappa_{kij}-\kappa_{jki}\right)\,.
\end{equation}
Notice that $A_{ijk}=-A_{jik}$ and $N_{ijk}=-N_{jik}$, i.e., these partial tensors inhabit the skew-symmetry of the Hall  tensor.

 Observe that, similarly to the piezoelectric tensor, and in  a contrast to the general case, the successive decomposition of the tensor  $N^{ijk}$ is impossible. Indeed on the $GL(3,\mathbb R)$-level, a subspace of the dimension 8 is the smallest one. Then it cannot contain $GL(3,\mathbb R)$-invariant proper subspaces.

 The Young tableau generated  subspaces $N_{1ijk}$ and $N_{2ijk}$ do not vanish. However these tensors do not satisfy the skew-symmetry relation (\ref{a-sym}). Consequently,  they do not form  subspaces of the $\kappa$-space. 
 They projection however are properly embedded into this space. This projection can be provided by    antisymmetrization of the tensors in the first pair of their indices. In fact, we have $N_{1[ij]k}=0$ and $N_{2[ij]k}=N_{ijk}$. 
 
 The alternative decomposition $N_{ijk}=\hat{N}_{1ijk}+\hat{N}_{2ijk}$ is even more suitable in our case. Indeed, these sub-tensors  do have the skew-symmetries  (\ref{a-sym}). Thus they  produce   subspaces of the $\kappa$-space. However, in the case of the Hall tensor, $\hat{N}_{2ijk}=0$ and $\hat{N}_{1ijk}=N_{ijk}$.
 
 These facts demonstrate that   the decomposition (\ref{P-dec}) is a unique irreducible $GL(3,\mathbb R)$-invariant decomposition of the 
Hall tensor. 
 
 \subsubsection{ $O(3,R)$-decomposition} 
Let us consider now the decomposition of $\kappa_{ijk}$ under the orthogonal group $O(3,\mathbb R)$. With  a covariant version of the  metric tensor  $g^{ij}$ at hand we can construct from  $\kappa_{ijk}$  only one   trace-type covector. Indeed for three vectors  $u_k,v_k,w_k$ defined in Eq.(\ref{3vectors}), we have 
 \begin{equation}\label{Hvectors}
 u_k=g^{ij}\kappa_{ijk}=0\,,\qquad   v_k=g^{ij}\kappa_{ikj}=-g^{ij}\kappa_{kij}=-w_k\,.
\end{equation}
Consequently we have only one independent covector $v_k$ that contributes to the mixed-symmetry part $N_{ijk}$. Due to Eq.(\ref{M-tensor}), we use in the decomposition 
\begin{equation}\label{M+P-Htensor}
    N_{ijk}=M_{ijk}+P_{ijk}\,
\end{equation}
 the vector part 
\begin{equation}\label{M-tensorH}
    M_{ijk}=
  \frac{1}{2}(v_jg_{ik}-v_ig_{jk})\,.
\end{equation}
We can check straightforwardly that the residue part $P_{ijk}=N_{ijk}-M_{ijk}$  satisfies the traceless relation 
\begin{equation}
    g^{ij}P_{ijk}=g^{ij}P_{ikj}=g^{ij}P_{kij}=0\,.
\end{equation}
Accordingly, we have a direct sum decomposition of the  subspace ${\bf N}$ into the direct sum of two invariant subspaces with the corresponding  dimensions 
\begin{equation}
    {\bf N}={\bf M}\oplus {\bf P}\,, \qquad 8=3\oplus 5\,.
\end{equation}

 \subsubsection{ $SL(3,R)$-decomposition}
 With the volume element structure defined on the basic space $V$, we are able to define the pseudo-scalar 
 \begin{equation}
     A=\frac 16 \epsilon^{ijk}\kappa_{ijk}=\frac 16 \epsilon^{ijk}A_{ijk}\,.
 \end{equation}
 The inverse relation reads
 \begin{equation}
     A_{ijk}=A\epsilon_{ijk}
 \end{equation}
 Similarly to Eq.(\ref{ABC}) we define three pseudo-tensors 
 \begin{equation}\label{ABC-P}
    A^i{}_m=\epsilon^{ijk}\kappa_{mjk}\,,\qquad 
    B^i{}_m=\epsilon^{ijk}\kappa_{kmj}\,,\qquad 
    C^i{}_m=\epsilon^{ijk}\kappa_{jkm}\,.
\end{equation}
 Observe that
 \begin{eqnarray}
     B^i{}_m=\epsilon^{ijk}\kappa_{kmj}=-\epsilon^{ijk}\kappa_{mkj}=
     -\epsilon^{ikj}\kappa_{mjk}=\epsilon^{ijk}\kappa_{mjk}=A^i{}_m\,.
 \end{eqnarray}
 Then also for the modified pseudo-tensors we have  $\check{B}^i{}_m=\check{A}^i{}_m$. 
 We use the relation (\ref{ABC-check-sum}) to derive the second relation between the pseudotensors,  $\check{C}^i{}_m=-2\check{A}^i{}_m$. Now we are able to express the mixed-symmetry part of the Hall tensor in term of the pseudo-tensor $\check{A}^i{}_m$. Changing the position of the indices in Eq.(\ref{N-BC}) and substituting the relations above, we derive
 \begin{equation}\label{N-BC-H}
N_{kmj}= \frac 12 \left(\check{A}^p{}_k\epsilon_{pmj}-\check{A}^p{}_m\epsilon_{pkj}-2\check{A}^p{}_j\epsilon_{kmp}\right)\,.
\end{equation}
 Notice that the lhs of this equation is skew-symmetric in the indices $k$ and $m.$ Eq.(\ref{N-BC-H}) proves that the representations of the mixed-symmetry part by the third-order tensor $N^{kmj}$ and the pseudo-tensor $\check{A}_p{}^k$ are completely identical. 
 
 \subsubsection{ $SO(3,R)$-decomposition}
 On a space endowed with the metric and the volume element structures, the mixed-type pseudo-tensor $\check{A}_i{}^j$ can be reverted into the covariant tensor. 
 Let us define
 \begin{equation}
     \check{A}^{ij}:=g^{mj}\check{A}^i{}_m\,.
 \end{equation}
 This traceless tensor of 8 independent components can be irredusibly decomposed into the sum of its symmetric and skew-symmetric parts of $8=3+5$ independent 
 components, respectively 
 \begin{equation}\label{N-BC-H1-0}
     \check{A}^{ij}=\check{A}^{[ij]}+\check{A}^{(ij)}\,.
 \end{equation}
 This decomposition  has to be in  a correspondence with the $O(3,\mathbb R)$-decomposition (\ref{M+P-Htensor}). In particular, we can guess the relations of the form $\check{A}^{[ij]}\sim v_i$ and $\check{A}^{(ij)}\sim P_{ijk}$. 
 Let us derive these expressions explicitly.
 Substituting into Eq. (\ref{N-BC-H}) the expression $\check{A}^p{}_m=\check{A}^{pr}g_{mr}$ we obtain 
 \begin{equation}\label{N-BC-H1}
N_{kmj}= \frac 12 \check{A}^{pr}\left(g_{kr}\epsilon_{pmj}-
g_{mr}\epsilon_{pkj}-2g_{jr}\epsilon_{kmp}\right)\,.
\end{equation}
Substituting here (\ref{N-BC-H1-0}) we have
\begin{equation}\label{N-BC-H1x}
N_{kmj}=\frac 12 \left(\check{A}^{[pr]}+\check{A}^{(pr)}\right)\left(g_{kr}\epsilon_{pmj}-
g_{mr}\epsilon_{pkj}-2g_{jr}\epsilon_{kmp}\right)\,.
\end{equation}
 We assume the skew-symmetric part of $\check{A}^{pr}$ to be parametrized as 
 \begin{equation}
     \check{A}^{[pr]}=-\frac 13 \epsilon^{prs}v_s\,.
 \end{equation}
 This expression is contributed to the right-hand-side of Eq.(\ref{N-BC-H1x}) as 
  \begin{equation}\label{N-BC-H2a}
-\frac 16  \epsilon^{prs}\left(g_{kr}\epsilon_{pmj}-
g_{mr}\epsilon_{pkj}-2g_{jr}\epsilon_{pkm}\right)v_s\,.
\end{equation}
Using (\ref{epsilon1}) we calculate this expression to  obtain that it is equal to the vector part of $N_{kmj}$
   \begin{equation}\label{N-BC-H2}
\frac 12 (g_{jk}v_m-g_{mj}v_k)=M_{kmj}\,.
\end{equation}
Thus we have proved that the skew-symmetric part of the pseudo-tensor, $\check{A}^{[pr]}$, indeed equivalent to the vector $v_k$. Since $\check{A}^{pr}$ contributes linearly to the right-hand-side of Eq.(\ref{N-BC-H1x}), the symmetric part of the  pseudo-tensor, $\check{A}^{(pr)}$ is equivalent to the traceless part $P_{ijk}$. 
 In particular, we have the identities 
  \begin{equation}\label{N-BC-H3}
M_{kmj}= \frac 12 \check{A}^{[pr]}\left(g_{kr}\epsilon_{pmj}-
g_{mr}\epsilon_{pkj}-2g_{jr}\epsilon_{kmp}\right)\,,
\end{equation}
 and 
   \begin{equation}\label{N-BC-H4}
P_{kmj}= \frac 12 \check{A}^{(pr)}\left(g_{kr}\epsilon_{pmj}-
g_{mr}\epsilon_{pkj}-2g_{jr}\epsilon_{kmp}\right)\,.
\end{equation}
 
 \subsubsection{Results}
 In this section, we derived the invariant decomposition of the Hall tensor for different categories  of the basic vector space. 
 We derived explicitly the  relation between the tensor and pseudo-tensor representations of the mixed-symmetry parts. The results are presented in Fig.3.  
  \begin{figure}[!ht]
\begin{center}
\includegraphics[width=0.8\linewidth]{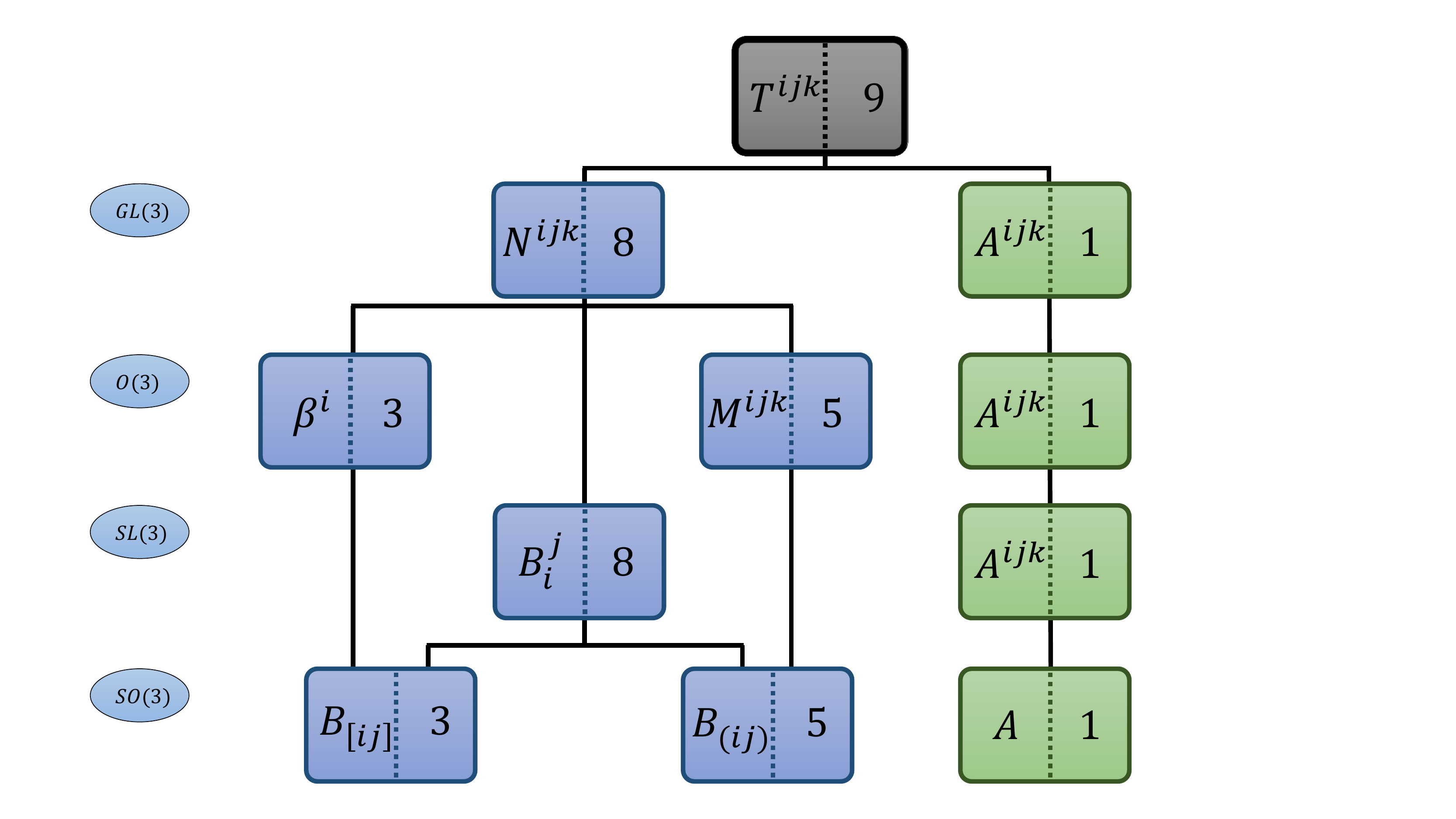}
\caption{Schematic representation of different types of decomposition of the  3-rd order Hall tensor. }
\label{figure:curve2b}
\end{center}
\end{figure}

 \begin{prop}
 Let a tensor $\kappa_{ijk}$ with the symmetry $\kappa_{ijk}=-\kappa_{jik}$ be given. 
 \begin{itemize}
     \item On the $GL(3,\mathbb R)$-level, the tensor is decomposed uniquely and irreducibly into the sum of two subtensors. The tensor space is decomposed into the direct sum of two subspaces: 
     \begin{equation}
         {\bf \kappa}={\bf N}\oplus{\bf A}\,,\qquad 18=10+8\,.
     \end{equation}
     \item On the $O(3,\mathbb R)$-level, the mixed- symmetry part $S^{ijk}$ is  decomposed  into the vector (trace)  and the traceless parts. In term of subspaces, 
     \begin{equation}
          {\bf S}={\bf M}\oplus{\bf P}\,,\qquad 8=3+5\,.
     \end{equation}
     Consequently,
     \begin{equation}
         {\bf \kappa}=({\bf M}\oplus{\bf P})\oplus{\bf A}\,,\qquad 9=(3+5)+1\,.
     \end{equation}
     \item The subspaces ${\bf M},{\bf P},$ and ${\bf A}$ are mutually orthogonal one to another. 
     \item On the $SL(3,\mathbb R)$-level, the tensor $N^{ijk}$ is expressed as a pseudo-tensor $B^i{}_j$.
      \item On the $SO(3,\mathbb R)$-level, the pseudo-tensor $B^i{}_j$ is decomposed irreducibly and uniquely into the sum of two pseudo-tensors $B_{[ij]}$ and $B_{(ij)}$. These tensors span the spaces ${\bf M}$ and ${\bf P}$, respectively. 
 \end{itemize}
 \end{prop}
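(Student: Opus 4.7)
The plan is to specialise the general three-stage decomposition machinery developed in Section~4 to a tensor $\kappa_{ijk}$ satisfying $\kappa_{ijk}=-\kappa_{jik}$, and at each level track which summands survive the pair-skew-symmetry constraint. At the $GL(3,\mathbb{R})$-level I would first note that the total symmetrization $\kappa_{(ijk)}$ vanishes identically, since $\kappa_{(ij)k}=0$ already kills any full symmetrization. Hence the general decomposition $T = S+A+N$ collapses to $\kappa = A + N$, with surviving dimensions $1$ and $8$. For uniqueness and irreducibility on this level I would repeat the argument used in the piezoelectric case: the isotopic pieces $N_1, N_2$ (and their $\hat N_1,\hat N_2$ variants) do not all satisfy the Hall symmetry when embedded in the $\kappa$-space, and an $8$-dimensional $GL(3,\mathbb{R})$-invariant subspace cannot contain two nontrivial proper invariant subspaces of the same dimension. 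This is precisely the argument already sketched just below (\ref{P-dec}).

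At the $O(3,\mathbb{R})$-level I would insert the metric and compute the three candidate trace vectors $u_k, v_k, w_k$ defined in (\ref{3vectors}). The pair-skew-symmetry immediately gives $u_k=0$ and $w_k=-v_k$, leaving a single independent trace vector. Substituting into the generic trace formula (\ref{M-tensor}) produces the simplified expression $M_{ijk}=\tfrac12(v_j g_{ik}-v_i g_{jk})$, and the residue $P_{ijk}=N_{ijk}-M_{ijk}$ is totally traceless by direct verification of the three trace conditions. Orthogonality of $\mathbf{M},\mathbf{P},\mathbf{A}$ then follows by the same mechanism as in the earlier orthogonality proposition: contractions of a totally skew tensor with a tensor containing a symmetric part vanish by (\ref{sym-pr3a}) and (\ref{sym-pr3b}), while $M^{ijk}P_{ijk}=0$ because $M$ is built from a single metric factor and $P$ is traceless in every pair of indices.

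At the $SL(3,\mathbb{R})$-level I would form the three partial $\epsilon$-contractions in (\ref{ABC-P}) and use the pair skew-symmetry to show that $B^i{}_m = A^i{}_m$; combined with the general identity (\ref{ABC-check-sum}) this forces $\check C^i{}_m = -2\check A^i{}_m$, so only one of the three traceless pseudo-tensors is independent. Substituting $\check B = \check A$ and $\check C = -2\check A$ into the general inversion formula (\ref{N-BC}) then yields the closed-form reconstruction (\ref{N-BC-H}) of $N_{kmj}$ from the single pseudo-tensor $\check A^p{}_j$, establishing the claimed equivalence.

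Finally, at the $SO(3,\mathbb{R})$-level I would lower an index, set $\check A^{ij} := g^{mj}\check A^i{}_m$, and split it into $\check A^{[ij]} + \check A^{(ij)}$ with dimensions $3$ and $5$. The hard part---and really the only non-routine step of the whole proposition---is to verify that, after substitution into (\ref{N-BC-H1}) and the parametrisation $\check A^{[pr]} = -\tfrac13 \epsilon^{prs} v_s$, the double $\epsilon$-contractions collapse via (\ref{epsilon1}) to exactly $\tfrac12(g_{jk}v_m - g_{mj}v_k) = M_{kmj}$. The symmetric-part identification $\check A^{(pr)} \leftrightarrow P_{kmj}$ then follows by linearity of the map $\check A \mapsto N$ together with the already established $O(3,\mathbb{R})$-decomposition $N = M + P$. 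I expect this final $\epsilon\epsilon\to\delta\delta$ bookkeeping to be the main source of potential error; all other steps are direct specialisations of the general results of Section~4 to the Hall symmetry.
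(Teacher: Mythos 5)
Your proposal is correct and follows essentially the same route as the paper: collapse of the $GL(3,\mathbb R)$ decomposition to $A+N$ with the minimality/isotopy argument for uniqueness, the single surviving trace vector $v_k$ giving $M_{ijk}=\tfrac12(v_jg_{ik}-v_ig_{jk})$ at the $O(3,\mathbb R)$ level, the relations $B^i{}_m=A^i{}_m$ and $\check C^i{}_m=-2\check A^i{}_m$ at the $SL(3,\mathbb R)$ level, and the final $\epsilon\epsilon\to\delta\delta$ reduction identifying $\check A^{[pr]}$ with $v_k$ and $\check A^{(pr)}$ with $P_{ijk}$. No substantive differences from the paper's argument; you even correctly give the dimension count $9=1+8$ where the printed proposition contains a typo ($18=10+8$).
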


\section{Conclusion}
In this paper, we study an invariant decomposition of the 3-rd order tensor into smaller sub-tensors. Even  this relatively simple case, demonstrates the principle problems of non-uniqueness of irreducible decomposition. We constructed explicitly different types of decomposition based on various geometrical structures defined on a basic vector space. 

In the cases of a physical interest, the 3-rd order tensor emerges as a constitutive tensor with additional symmetries coming from the symmetries of the basic physical variables. We considered the pair symmetric tensor of a piezoelectric type and a skew-symmetric tensor of a Hall type. We show that for such tensors the irreducible decomposition is unique without ambiguity.  

The problem of irreducible decomposition is not of the mathematical interest only. In fact, different irreducible invariant parts of the constitutive tensor have to represent different physical features of  media. The examples of such type of presentation is known in gravity, solid state electromagnetism, and elasticity theory. 

The irreducible decomposition presented here can be useful for investigation of the natural materials (as piezoelectric crystals) as well as for design the artificial nano-materials.

\section*{Data Availability Statement}
Data sharing is not applicable to this article as no new data were created or analyzed in this
study.

\bibliographystyle{plain}

\end{document}